\documentclass[a4paper,USenglish,cleveref, autoref, thm-restate]{lipics-v2019}
\nolinenumbers


\bibliographystyle{plainurl}

\title{Synchronization of Deterministic Visibly Push-Down Automata}


\author{Henning Fernau}{Universität Trier, Fachbereich IV, Informatikwissenschaften, Germany} {fernau@uni-trier.de}{https://orcid.org/0000-0002-4444-3220}{}

\author{Petra Wolf}{Universität Trier, Fachbereich IV, Informatikwissenschaften, Germany \and \url{https://www.wolfp.net/}}{wolfp@informatik.uni-trier.de}{https://orcid.org/0000-0003-3097-3906}{DFG project  
	FE 560/9-1}

\authorrunning{H.\ Fernau, P.\ Wolf} 

\Copyright{Henning Fernau, Petra Wolf} 

\ccsdesc[500]{Theory of computation~Problems, reductions and completeness}
\ccsdesc[500]{Theory of computation~Grammars and context-free languages}
\ccsdesc[500]{Theory of computation~Automata extensions}
\ccsdesc[500]{Theory of computation~Transducers} 

\keywords{Synchronizing word, Deterministic visibly push-down automata, Deterministc finite atuomata, Finite-turn push-down automata, Sequential transducer, Computational complexity} 

\category{} 

\relatedversion{A full version of the paper is available at ArXiv~\cite{vpda-crossref}.} 

\supplement{}





\EventEditors{John Q. Open and Joan R. Access}
\EventNoEds{2}
\EventLongTitle{42nd Conference on Very Important Topics (CVIT 2016)}
\EventShortTitle{CVIT 2016}
\EventAcronym{CVIT}
\EventYear{2016}
\EventDate{December 24--27, 2016}
\EventLocation{Little Whinging, United Kingdom}
\EventLogo{}
\SeriesVolume{42}
\ArticleNo{23}

\usepackage{float}
\usepackage{todonotes}
\usepackage{amsmath}

\newcommand{\NP}{\textsf{NP}}
\newcommand{\PSPACE}{\textsf{PSPACE}}
\newcommand{\NPSPACE}{\textsf{NPSPACE}}
\newcommand{\NLOGSPACE}{\textsf{NLOGSPACE}}
\newcommand{\PTIME}{\textsf{P}}

\newcommand{\EXP}{\textsf{EXPTIME}}

\newcommand{\mto}[1][w]{\ensuremath{\overset{#1}{\longrightarrow}}}
\DeclareMathOperator{\lang}{\mathcal{L}}
\newcommand{\cerny}{{\v{C}}ern{\'{y}}}
\usepackage{soul}
\usepackage{wasysym}
\usepackage{booktabs}
\usepackage{tikz}
\usetikzlibrary{arrows,automata}
\usepackage{apxproof}

\begin{document}

\maketitle

\begin{abstract}
We generalize the concept of synchronizing words for finite automata, which map all states of the automata to the same state, to deterministic visibly push-down automata. Here, a synchronizing word $w$ does not only map all states to the same state but also fulfills some conditions on the stack content of each run after reading $w$. We consider three types of these stack constraints: after reading~$w$, the stack (1) is empty in each run, (2) contains the same sequence of stack symbols in each run, or (3) contains an arbitrary sequence which is independent of the other runs. We show that in contrast to general deterministic push-down automata, it is decidable for deterministic visibly push-down automata whether there exists a synchronizing word with each of these stack constraints, more precisely, the problems are in \EXP. Under the constraint (1), the problem is even in \PTIME. For the sub-classes of deterministic very visibly push-down automata, the problem is in \PTIME\ for all three types of  constraints. We further study variants of the synchronization problem where the number of turns in the stack height behavior caused by a synchronizing word is restricted, as well as the problem of synchronizing a variant of a sequential transducer, which shows some visibly behavior, by a word that synchronizes the states and produces the same output on all runs.
\end{abstract}

\section{Introduction}
The classical \emph{synchronization problem} asks, given a deterministic finite automaton (DFA), whether there exists a \emph{synchronizing word} that brings all states of the automaton to a single state. While this problem is solvable in polynomial time~\cite{cerny1964,San2005,Vol2008}, many variants, such as synchronizing only a subset of states~\cite{San2005}, or synchronizing a partial automaton without taking an undefined transition (called carefully synchronizing)~\cite{DBLP:journals/mst/Martyugin14}, are \PSPACE-complete. Restricting the length of a potential synchronizing word by a parameter in the input also yields a harder problem, namely the \NP-complete short synchronizing word problem~\cite{Rys80,DBLP:journals/siamcomp/Eppstein90}.
The field of synchronizing automata has been intensively studied over the last years, among others in attempt to verify the famous \cerny\ conjecture claiming that every synchronizable DFA admits a synchronizing word of quadratic length in the number of states~\cite{cerny1964,DBLP:journals/jalc/Cerny19,DBLP:journals/eik/Starke66b,DBLP:journals/jalc/Starke19}. The currently best upper bound on this length is cubic, and only very little progress has been made, basically improving on the multiplicative constant factor in front of the cubic term, see~\cite{DBLP:journals/jalc/Shitov19, DBLP:conf/stacs/Szykula18}.
More information on synchronization of DFA and the \cerny\ conjecture can be found in~\cite{Vol2008,beal_perrin_2016,JALC20}.
In this work, we want to move away from deterministic finite automata to more general deterministic visibly push-down automata.\footnote{
The term \emph{synchronization of push-down automata}  already  occurs in the literature, i.e., in ~\cite{caucal2006synchronization,DBLP:journals/mst/ArenasBL11}, but there the term \emph{synchronization} refers to some relation of the input symbols to the stack behavior~\cite{caucal2006synchronization} or to reading different words in parallel~\cite{DBLP:journals/mst/ArenasBL11}; do not to confuse it with our notion of synchronizing states.}

The synchronization problem has been generalized in the literature to other automata models including infinite-state systems with infinite branching such as weighted and timed automata~\cite{DBLP:conf/fsttcs/0001JLMS14,DBLP:phd/hal/Shirmohammadi14} or register automata~\cite{babari2016synchronizing}. 
For instance, register automata are infinite state systems where a state consists of a control state and register contents. 

Another automaton model, where the state set is enhanced with a potential infinite memory structure, namely a stack, is the class of \emph{nested word automata} (NWAs were introduced in~\cite{DBLP:journals/jacm/AlurM09}), where an input word is enhanced with a matching relation determining at which pair of positions in a word a symbol is pushed to and popped from the stack. The class of languages accepted by NWAs is identical to the class of \emph{visibly push-down languages} (VPL) accepted by \emph{visibly push-down automata} (VPDA) and form a proper sub-class of the deterministic context-free languages. 
VPDAs have first been studied by Mehlhorn~\cite{DBLP:conf/icalp/Mehlhorn80} under the name \emph{input-driven pushdown automata} and became quite popular more recently due to the work by Alur and Madhusudan~\cite{DBLP:conf/stoc/AlurM04}, showing that VPLs share several nice properties with regular languages. For more on VPLs we refer to the survey~\cite{DBLP:journals/sigact/OkhotinS14}.
In~\cite{DBLP:journals/jalc/ChistikovMS19}, the synchronization problem for NWAs was studied. There, the concept of synchronization was generalized to bringing all states to one single state such that for all runs the stack is empty (or in its start configuration) after reading the synchronizing word. In this setting, the synchronization problem is solvable in polynomial time (again indicating similarities of VPLs with regular languages), while the short synchronizing word problem (with length bound given in binary) is \PSPACE-complete; the question of synchronizing from or into a subset is \EXP-complete. Also, matching exponential upper bounds on the length of a synchronizing word are given.

Our attempt in this work is to study the synchronization problem for real-time (no $\epsilon$-transitions) deterministic visibly push-down automata (DVPDA) and several sub-classes thereof, like real-time deterministic very visibly push-down automata (DVVPDA for short; this model was introduced in~\cite{DBLP:phd/dnb/Ludwig19}), real-time deterministic visibly counter automata (DVCA for short; this model appeared a.o.~in \cite{DBLP:conf/stacs/BaranyLS06,DBLP:journals/corr/abs-0901-2068,DBLP:conf/fsttcs/Bollig16,DBLP:conf/mfcs/HahnKLL15,DBLP:conf/dlt/KrebsLL15,DBLP:conf/stacs/KrebsLL15}) and finite turn variants thereof.
We want to point out that, despite the equivalence of the accepted language class, the automata models of nested word automata and visibly push-down automata still differ and the results from~\cite{DBLP:journals/jalc/ChistikovMS19} do not immediately transfer to VPDAs.
In general, the complexity of the synchronization problem can differ for different automata models accepting the same language class. For instance, in contrast to the polynomial time solvable synchronization problem for DFAs, the generalized synchronization problem for finite automata with one ambiguous transition is \PSPACE-complete, as well as the problem of carefully synchronizing a DFA with one undefined transition~\cite{DBLP:conf/wia/Martyugin12}.
We will not only consider the synchronization model introduced in~\cite{DBLP:journals/jalc/ChistikovMS19}, where reading a synchronizing word results in an empty stack on all runs; but we will also consider a synchronization model where not only the final state on every run must be the same but also the stack content needs to be identical, as well as a model where only the states needs to by synchronized and the stack content might be arbitrary. 
These three models of synchronization have been introduced in~\cite{MY20}, where length bounds on a synchronizing word for general DPDAs have been studied dependent on the stack height.
The complexity of these three concepts of synchronization for general DPDAs are considered in~\cite{dpda-crossref} where it is shown that synchronizability is undecidable for general DPDAs and deterministic counter automata (DCA). It becomes decidable for deterministic  partially blind counter automata and is \PSPACE-complete for some types of finite turn DPDAs, while it is still undecidable for other types of finite turn DPDAs.

In contrast, we will show in the following that for DVPDAs and considered sub-classes hereof, the synchronization problem for all three stack models, with restricted or unrestricted number of turns, is in \EXP\ and hence decidable. 
For DVVPDAs and DVCAs, the synchronization problems for all three stack models (with unbounded number of turns) are even in \PTIME.
Like the synchronization problem for NWAs in the empty stack model considered in~\cite{DBLP:journals/jalc/ChistikovMS19}, we observe that the synchronization problem for DVPDAs in the empty stack model is solvable in polynomial time, whereas synchronization of DVPDAs in the same and arbitrary stack models is at least \PSPACE-hard.
If the number of turns caused by a synchronizing word on each run is restricted, the synchronization problem becomes \PSPACE-hard for all considered automata models for $n>0$ and is only in \PTIME\ for $n=0$ in the empty stack model.
We will further introduce variants of synchronization problems distinguishing the same and arbitrary stack models by showing complementary complexities in these models. For problems considered in~\cite{dpda-crossref}, these two stack models have always shared their complexity status.



Missing proof details can be found in the appendix.

\section{Fixing Notations}
\label{sec:Notation}
We refer to the empty word as $\epsilon$.
For a finite alphabet $\Sigma$ we denote with $\Sigma^*$ the set of all words over $\Sigma$ and with $\Sigma^+ = \Sigma\Sigma^*$ the set of all non-empty words.
For $i\in \mathbb{N}$ we set $[i] = \{1, 2, \ldots, i\}$.
For $w \in \Sigma^*$ we denote with $|w|$ the length of $w$, with $w[i]$ for $i \in [|w|]$ the $i$'th symbol of $w$, and with $w[i..j]$ for $i, j \in [|w|]$ the subword $w[i]w[i+1] \ldots w[j]$ of $w$. We call $w[1..i]$ a prefix and $w[i..|w|]$ a suffix of $w$. 
If $i<j$, then $w[j,i]=\epsilon$.

We call $A= (Q, \Sigma, \delta, q_0, F)$ a \emph{deterministic finite automaton} (DFA for short) if $Q$ is a finite set of states, $\Sigma$ is a finite input alphabet, $\delta$ is a transition function $Q \times \Sigma \to Q$, $q_0$ is the initial state and $F \subseteq Q$ is the set of final states. The transition function $\delta$ is generalized to words by $\delta(q, w) = \delta(\delta(q, w[1]), w[2..|w|])$ for $w \in \Sigma^*$.
A word $w\in \Sigma^*$ is accepted by $A$ if $\delta(q_0, w) \in F$ and the language accepted by $A$ is defined by $\lang(A) = \{w \in \Sigma^* \mid \delta(q_0, w) \in F\}$.
We extend $\delta$ to sets of states  $Q'\subseteq Q$  or to sets of letters $\Sigma'\subseteq \Sigma$, letting  $\delta(Q',\Sigma')=\{\delta(q',\sigma')\mid  (q',\sigma')\in Q'\times\Sigma'\}$. Similarly, we may  write $\delta(Q',\Sigma')=p$ to define $\delta(q',\sigma')=p$ for each $(q',\sigma')\in Q'\times\Sigma'$. 
The synchronization problem for DFAs (called \textsc{DFA-Sync}) asks for a given DFA $A$ whether there exists a synchronizing word for $A$. A word $w$ is called a \emph{synchronizing word} for a DFA~$A$, if it brings all states of the automaton to one single state, i.e., $|\delta(Q, w)| = 1$.

We call $M = (Q, \Sigma, \Gamma, \delta, q_0, \bot, F)$ a \emph{deterministic push-down automaton} (DPDA for short) if $Q$ is a finite set of states; the finite sets $\Sigma$ and $\Gamma$ are the input and stack alphabet, respectively; $\delta$ is a transition function $Q \times \Sigma \times \Gamma \to Q \times \Gamma^*$; 
 $q_0$ is the initial state; $\bot \in \Gamma$ is the stack bottom symbol which is only allowed as the first (lowest) symbol in the stack, i.e., if $\delta(q,a,\gamma)=(q',\gamma')$ and $\gamma'$ contains $\bot$, then $\bot$ only occurs in $\gamma'$ as its prefix and moreover, $\gamma=\bot$; and $F$ is the set of final states. 
We will only consider \emph{real-time} push-down automata and forbid $\epsilon$-transitions, as can be seen in the definition.
Notice that the bottom symbol can be removed, but then the computation gets stuck.

Following~\cite{DBLP:journals/jalc/ChistikovMS19}, a \emph{configuration} of $M$ is a tuple $(q, \upsilon) \in Q \times \Gamma^*$.
 For a letter $\sigma \in \Sigma$ and a stack content $\upsilon$ with $|\upsilon| = n$, we write $(q, \upsilon) \mto[\sigma] (q', \upsilon[1..(n-1)]\gamma)$ if $\delta(q, \sigma, \upsilon[n]) = (q', \gamma)$.
This means that the top of the stack $\upsilon$ is the right end of  $\upsilon$.  
We also denote with $\longrightarrow$ the reflexive transitive closure of the union of $\mto[\sigma]$ over all letters in $\Sigma$. The input words on top of $\longrightarrow$ are concatenated accordingly, so that $\longrightarrow\, =\bigcup_{w\in\Sigma^*}\mto[w]$.
The language $\lang(M)$ accepted by a DPDA $M$ is 
$\lang(M) = \{w \in \Sigma^* \mid (q_0, \bot) \mto[w] (q_f, \gamma), q_f \in F\}$.
We call the sequence of configurations $(q, \bot) \mto[w] (q', \gamma)$ the \emph{run} induced by $w$, starting in $q$, and ending in $q'$. We might also call $q'$ the \emph{final state} of the run. 

We will discuss three different concepts of synchronizing DPDAs. For all concepts we demand that a synchronizing word $w \in \Sigma^*$ maps all states, starting with an empty stack, to the same synchronizing state, i.e., for all
$q, q' \in Q \colon (q, \bot) \mto (\overline{q}, \upsilon), (q', \bot) \mto (\overline{q}, \upsilon')$. In other words, for a synchronizing word all runs started on some states in $Q$ end up in the same state.
 In addition to synchronizing the states of a DPDA we will consider the following two conditions for the stack content: 
 (1) $\upsilon = \upsilon' = \bot$, 
 (2) $\upsilon = \upsilon'$. 
We will call (1) the \emph{empty stack model} and (2) the \emph{same stack model}. In the third case, we do not put any restrictions on the stack content and call this the \emph{arbitrary stack model}.

\noindent
As we are only interested in synchronizing a DPDA we can neglect the start and final states.

Starting from DPDAs we define the following sub-classes thereof:
\begin{itemize} 
%
\item A \emph{deterministic visibly push-down automaton} (DVPDA) is a DPDA where the input alphabet~$\Sigma$ can be partitioned into $\Sigma = \Sigma_{\text{call}} \cup \Sigma_{\text{int}} \cup \Sigma_{\text{ret}}$ such that the change in the stack height is determined by the partition of the alphabet. To be more precise, the transition function $\delta$ is modified such that it can be partitioned accordingly into $\delta = \delta_{\text{c}} \cup \delta_{\text{i}} \cup \delta_{\text{r}}$ such that $\delta_{\text{c}} \colon Q \times \Sigma \to Q \times (\Gamma\backslash\{\bot\})$  puts a symbol on the stack, $\delta_{\text{i}} \colon Q \times \Sigma \to Q$ leaves the stack unchanged, and $\delta_{\text{r}} \colon Q \times \Sigma \times \Gamma \to Q$ reads and pops a symbol from the stack~\cite{DBLP:conf/stoc/AlurM04}. If~$\bot$ is the symbol on top of the stack, then $\bot$ is only read and not popped. We call letters in $\Sigma_{\text{call}}$ \emph{call} or \emph{push} letters; letter in $\Sigma_{\text{int}}$ \emph{internal} letters; and letters in $\Sigma_{\text{ret}}$ \emph{return} or \emph{pop} letters.
The language class accepted by DVPDA is equivalent to the class of languages accepted by deterministic nested word automata (see~\cite{DBLP:journals/jalc/ChistikovMS19}).
\item A \emph{deterministic very visibly push-down automaton} (DVVPA) is a DVPDA where not only the stack height but also the stack content is completely determined by the input alphabet, i.e., for a letter $\sigma \in \Sigma$ and all states $p, q \in Q$ for $\delta_{\text{c}}(p, \sigma) = (p', \gamma_p)$ and $\delta_{\text{c}}(q, \sigma) = (q', \gamma_q)$ it holds that $\gamma_p = \gamma_q$.
\item A \emph{deterministic visibly (one) counter automaton} (DVCA) is a DVPDA where $|\Gamma\backslash\{\bot\}| = 1$; note that every DVCA is also a DVVPDA.
\end{itemize}
We are now ready to define a family of synchronization problems, the complexity of which will be our subject of study in the following chapters.
\begin{definition}[\sc Sync-DVPDA-Empty]
	\ \\
	Given: DPDA $M = (Q, \Sigma, \Gamma, \delta, \bot)$.\\
	Question: Does there exist a word $w \in \Sigma^*$ that synchronizes $M$ in the empty stack model?
\end{definition}
For the same stack model, we refer to the synchronization problem above as \textsc{Sync-DVPDA-Same} and as \textsc{Sync-DVPDA-Arb} in the arbitrary stack model. Variants of these problems are defined by replacing the DVPDA in the definition above by a DVVPDA, and DVCA.
If results hold for several stack models or automata models, then we summarize the problems by using set notations in the corresponding statements.
For the problems \textsc{Sync-DVPDA-Same} and \textsc{Sync-DVPDA-Arb} we introduce two further refined variants of these problems, denoted by the extension \textsc{-Return} and \textsc{-NoReturn}, where for all input DVPDA in the former variant  $\Sigma_{\text{ret}}\neq \emptyset$ holds, whereas in the latter variant $\Sigma_{\text{ret}} = \emptyset$ holds. In the following these variants reveal insights in the differences between synchronization in the same stack and arbitrary stack models, as well as connections to a concept of trace-synchronizing a sequential transducer showing some visibly behavior.


We will further consider synchronization of these automata classes in a finite-turn setting.
%
Finite-turn push-down automata are introduced in~\cite{ginsburg1966finite}.
We adopt the definition in~\cite{valiant1973decision}.
For a DVPDA $M$ an \emph{upstroke} of $M$ is a sequence of configurations induced by an input word~$w$ such that no transition decreases the stack-height. Accordingly, a \emph{downstroke} of $M$ is a sequence of configurations in which no transition increases the stack-height. A stroke is either an upstroke or downstroke. 
A DVPDA $M$ is an $n$-turn DVPDA if for all $w \in \lang(M)$ the sequence of configurations induced by $w$ can be split into at most $n+1$ strokes. Especially, for 1-turn DVPDAs each sequence of configurations induced by an accepting word consists of one upstroke followed by a most one downstroke.
Two subtleties arise when translating this concept to synchronization: (a) there is no initial state so that there is no way to associate a stroke counter to a state, and (b) there is no language of accepted words that restricts the set of words on which
the number of strokes should be limited. 
We therefore generalize the concept of finite-turn DVPDAs to finite-turn synchronization for DVPDAs as follows.
\begin{definition} \textsc{$n$-Turn-Sync-DVPDA-Empty}
	\ \\
	Given: DVPDA $M = (Q, \Sigma, \Gamma, \delta, q_0, \bot, F)$.\\
	Question: Is there a synchronizing word $w\in \Sigma^*$ in the empty stack model, such that 
	for all states $q \in Q$, the sequence of configurations $(q, \bot) \mto[w] (\overline{q}, \bot)$ consists of at most $n+1$ strokes?
\end{definition}
We call such a synchronizing word $w$ an \emph{$n$-turn synchronizing word} for $M$.
We define \textsc{$n$-Turn-Sync-DVPDA-Same} and \textsc{$n$-Turn-Sync-DVPDA-Arb} accordingly for the same stack and arbitrary stack model. Further we extend the problem definition to other classes of automata such as real-time DVVPDAs, and DVCAs.
Table~\ref{tab:VPA-results} summarizes our results, obtained in the next sections, on the complexity status of these problems together with the above introduced synchronization problems.
\begin{table}
	\begin{tabular}{lccc}
		class of automata & empty stack model & same stack model & arbitrary stack model\\
		\toprule
		DVPDA & \PTIME & \PSPACE-hard & \PSPACE-hard\\
		DVPDA-NoReturn & \PTIME & \PSPACE-complete & \PTIME\\
		DVPDA-Return & \PTIME & \PTIME & \PSPACE-hard\\
		$n$-Turn-Sync-DVPDA & \PSPACE-hard & \PSPACE-hard & \PSPACE-hard\\
		0-Turn-Sync-DVPDA & \PTIME & \PSPACE-complete & \PSPACE-complete\\
		DVVPDA & \PTIME & \PTIME & \PTIME\\
		$n$-Turn-Sync-DVVPDA & \PSPACE-hard & \PSPACE-hard & \PSPACE-hard\\
		0-Turn-Sync-DVVPDA & \PTIME & \PSPACE-complete & \PSPACE-complete\\
		DVCA & \PTIME & \PTIME & \PTIME\\
		$n$-Turn-Sync-DVCA & \PSPACE-hard & \PSPACE-hard & \PSPACE-hard\\
		1-Turn-Sync-DVCA & \PSPACE-complete & \PSPACE-complete & \PSPACE-complete\\
		0-Turn-Sync-DVCA & \PTIME & \PSPACE-complete & \PSPACE-complete\\
	\end{tabular}
	\caption{Complexity status of the synchronization problem for different classes of deterministic real-time visibly push-down automata in different stack synchronization modes. For the $n$-turn synchronization variants, $n$ takes all values  not explicitly listed. All our problems are  in \EXP.}
	\label{tab:VPA-results}
\end{table} 

Finally, we introduce two \PSPACE-complete problems for DFAs to reduce from later.
\begin{definition}[\textsc{DFA-Sync-Into-Subset} (\PSPACE-complete~\cite{DBLP:journals/ipl/Rystsov83})]
	\ \\
	Given: DFA $A = (Q, \Sigma, \delta)$, subset $S\subseteq Q$.\\
	Question: Is there a word $w\in \Sigma^*$ such that $\delta(Q, w) \subseteq S$?
\end{definition}
\begin{definition}[\textsc{DFA-Sync-From-Subset} (\PSPACE-complete~\cite{San2005})]
	\ \\
	Given: DFA $A = (Q, \Sigma, \delta)$ with $S \subseteq Q$.\\
	Question: Is there a word $w \in \Sigma^*$ that synchronizes $S$, i.e., for which $|\delta(S, w)| = 1$ is true?
\end{definition}
\section{DVPDAs -- Distinguishing the Stack Models}
\label{sec:DVPDA}
We start with some positive result showing that we come down from the undecidability of the synchronization problem for general DPDAs in the empty set model to a polynomial time solvable version by considering visibly DPDAs. 
\begin{theorem}
	\label{thm:Empty-P}
	The problems \textsc{Sync-DVPDA-Empty}, \textsc{Sync-DVCA-Empty}, and \textsc{Sync-DVVPDA-Empty} are decidable in polynomial time.
\end{theorem}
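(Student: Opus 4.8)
The plan is to mimic the classical polynomial-time algorithm for \textsc{DFA-Sync}, which reduces synchronizability to the question of whether every \emph{pair} of states can be merged, and then to show that the corresponding pairwise test can be carried out in polynomial time by a fixpoint (summary) computation on a product automaton. Since DVCAs and DVVPDAs are special cases of DVPDAs, it suffices to treat \textsc{Sync-DVPDA-Empty}; the other two problems follow immediately.

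First I would record the crucial consequence of visibility: for a fixed input word the stack-height profile of a run is completely determined by the word and the (common) starting height, independently of the starting state, because call, internal, and return letters raise, keep, and lower the height uniformly, and a return read on $\bot$ merely reads $\bot$ and leaves the height unchanged. Hence any word $w$ that synchronizes in the empty stack model starts and ends every run at height one and keeps the height $\ge 1$ throughout; call such a word a \emph{bottom-run word}. A bottom-run word $u$ induces a well-defined state transformation $f_u\colon Q\to Q$ (the run from $(q,\bot)$ is deterministic and ends in $(f_u(q),\bot)$), and a concatenation of bottom-run words is again a bottom-run word, with the transformations composing. Using this, I would prove the reduction to pairs: $M$ is synchronizable in the empty stack model iff for every pair $\{p,q\}\subseteq Q$ there is a bottom-run word $u$ with $f_u(p)=f_u(q)$. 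The direction ``$\Rightarrow$'' is immediate by restricting a synchronizing word to the two runs started in $p$ and $q$. For ``$\Leftarrow$'' I iterate: starting from the active set $Q$ with all stacks at $\bot$, repeatedly pick two still-distinct states and apply a bottom-run word merging them; because $f_u$ is a genuine function on $Q$ and the stack returns to $\bot$, the active set shrinks by at least one and stays at the bottom, so after at most $|Q|-1$ rounds a single state with empty stack is reached, and the concatenation of the chosen bottom-run words synchronizes $M$.

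The remaining and most delicate step is to decide pairwise mergeability in polynomial time, and this is where the stack has to be handled with care. To force a \emph{single} word to act on both runs simultaneously, I run two copies of $M$ in lockstep on the product automaton $P=M\times M$, whose states are $Q\times Q$ and which on a call letter pushes the pair of symbols pushed by the two copies; on a return letter each copy pops its own top symbol (consistent, since both share the height profile), and at the bottom a return reads the pair $(\bot,\bot)$ and acts internally. On $P$ I compute two relations over its $|Q|^2$ states by saturation. A \emph{well-matched} relation is built from reflexivity, internal letters, transitive composition, and the matched-call-return rule that wraps a well-matched segment between a call pushing some pair $\bar\gamma$ and a return popping $\bar\gamma$; this captures runs that leave and return to their starting height strictly from above. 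A \emph{bottom} relation additionally allows bottom-internal moves (internal letters, and return letters read on $(\bot,\bot)$) and excursions consisting of a call from the bottom, a well-matched segment at height $\ge 2$, and the matching return. The pair $\{p,q\}$ is then mergeable by a bottom-run word iff the product state $(p,q)$ reaches some diagonal state $(s,s)$ in the bottom relation. Both relations are subsets of a set of size at most $|Q|^4$, every saturation step adds at least one element and costs time polynomial in $|Q|$, $|\Gamma|$, and $|\Sigma|$, so the whole test runs in polynomial time.

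The main obstacle is precisely the correct bookkeeping of the stack in this summary computation: return transitions depend on the popped symbol, which is what makes the product track \emph{pairs} of stack symbols, and return letters behave as internal moves when read at $\bot$ (they are ``absorbed'' rather than popping). This absorbing behaviour is exactly what distinguishes bottom-run words from ordinary well-matched words and is the reason a single Dyck-style reachability relation does not suffice, forcing the two-layered well-matched/bottom fixpoint. Once soundness and completeness of these two relations with respect to actual runs are argued in the standard way, combining them with the pairwise-merging reduction yields the claimed polynomial-time decision procedure for all three automata classes.
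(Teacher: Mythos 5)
Your proposal is correct, and its skeleton---the reduction to pairwise synchronization, justified by the observation that visibility forces all runs to share a single stack-height profile, followed by a product-automaton test for each pair---is exactly the paper's. The difference lies only in the last step. The paper, for each pair $(p,q)$, builds a product DVPDA $M\times M[p,q]$ in which diagonal states $(r,r)$ switch into a plain copy of $Q$ upon a return letter read on $\bot$ (a zero-test), makes $Q$ the accepting set, and then simply cites the polynomial-time emptiness test for DVPDAs of Alur and Madhusudan; since this trick needs a return letter to certify the empty stack, the paper treats $\Sigma_{\text{ret}}=\emptyset$ as a separate case (there the automaton degenerates to a DFA). You instead decide reachability of the diagonal directly by your two-layer well-matched/bottom summary saturation on $M\times M$---in effect re-proving, inline, the known polynomial-time reachability/emptiness result for visibly pushdown automata that the paper invokes as a black box. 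Your route is self-contained, handles $\Sigma_{\text{ret}}=\emptyset$ uniformly (the bottom relation simply has no excursions), and needs no appended return letter; its cost is exactly the soundness and completeness argument for the saturation, which you defer to ``the standard way''---that is indeed standard summary computation for pushdown reachability, but it is the portion of the proof the paper avoids having to write at all. Both yield valid polynomial-time decision procedures for all three automata classes.
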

\begin{proof}
	We prove the claim for \textsc{Sync-DVPDA-Empty} as the other automata classes are sub-classes of DVPDAs.
	Let $M = (Q, \Sigma_{\text{call}} \cup \Sigma_{\text{int}} \cup \Sigma_{\text{ret}}, \Gamma, \delta, \bot)$ be a DVPDA. First, observe that if $\Sigma_{\text{ret}}$ is empty, then any synchronizing word $w$ for $M$ in the empty stack model cannot contain any letter from $\Sigma_{\text{call}}$. Hence, $M$ is basically a DFA and for DFAs the synchronization problem is in $\PTIME$~\cite{cerny1964,Vol2008,San2005}.
	From now on, assume $\Sigma_{\text{ret}} \neq \emptyset$. We show that a pair argument similar to the one for DFAs can be applied, namely that $M$ is synchronizable in the empty stack model if and only if every pair of states $p, q \in Q$ can be synchronized in the empty stack model. The only if direction is clear as every synchronizing word for $Q$ also synchronizes each pair of states. For the other direction, observe that since $M$ is a DVPDA, the stack height of each path starting in any state of $M$ is predefined by the sequence of input symbols. Hence, if we focus on the two runs starting in $p, q$ and ensure that their stacks are empty after reading a word $w$, then also the stacks of all other runs starting in other states in parallel are empty after reading $w$. Therefore, we can successively concatenate words that synchronize some pair of active states in the empty stack model and end up with a word that synchronizes all states of $M$ in the empty stack model. Further formal algorithmic details can be found in the appendix.
\begin{toappendix}
\begin{proof}[Formal and algorithmic proof details 	of Theorem~\ref{thm:Empty-P}]
In order to determine if a pair of states $p, q \in Q$ can be synchronized in the empty stack model, we build the following product automaton $M\times M[p, q] = (Q \times Q \cup Q, \Sigma_{\text{call}} \cup \Sigma_{\text{int}} \cup \Sigma_{\text{ret}}, \Gamma, \delta^2, (p, q), \bot, Q)$. For all states in $(r, s) \in Q \times Q$ for which $r \neq s$, $\delta^2$ simulates the actions of $\delta$ on $r$ in the first component and the actions of $\delta$ on $s$ in the second component. For states $(r, r) \in Q\times Q$, this is also the case for all transitions except for zero-tests of the stack, as here we transition to the corresponding state $r \in Q$.
	For $Q$, $\delta^2$, restricted to $Q$, is the same as $\delta$. 
	Clearly, $M\times M[p, q]$ accepts all words that have $w\sigma_r$ as a prefix, for which $w$ synchronizes the states $p$ and $q$ in $M$ in the empty stack model and $\sigma_r$ is any return letter in $\Sigma_{\text{ret}}$ that checks the empty stack condition. Further for all pairs of states $p, q$, $M \times M[p, q]$ is a DVPDA.
		As the emptiness problem for DVPDAs is in $\PTIME$~\cite{DBLP:conf/stoc/AlurM04}, we can build and test all product automata  $M \times M[p, q]$ for non-emptiness in polynomial time. 
\end{proof}
\end{toappendix}
\end{proof}
%
Does this mean everything is easy and we are done? Interestingly, the picture is not that simple, as considering the same and arbitrary stack models shows.
\begin{theorem}\label{Sync-DVPDA-Same-PSPACE}
	The problem \textsc{Sync-DVPDA-Same} is \PSPACE-hard.
\end{theorem}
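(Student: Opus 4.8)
The plan is to prove hardness by a polynomial-time reduction from \textsc{DFA-Sync-Into-Subset}, which is \PSPACE-complete. Given a DFA $A = (Q, \Sigma, \delta)$ and a target subset $S \subseteq Q$, I will build a DVPDA $M$ that is synchronizable in the same stack model if and only if there is a word $w$ with $\delta(Q, w) \subseteq S$. A guiding intuition is that the hardness of the same stack model comes precisely from the inability to repair the stack once symbols have been pushed, so the construction will use \emph{no} return letters at all, i.e.\ $\Sigma_{\text{ret}} = \emptyset$.

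For the construction, the internal letters of $M$ are the letters of $\Sigma$, acting on the state set $Q \cup \{p_0, f\}$ exactly as $\delta$ does on $Q$; here $p_0$ is a fresh sink with $\delta(p_0,\sigma)=p_0$, $f$ is a fresh target with $\delta(f,\sigma)=f$ for all $\sigma\in\Sigma$, and neither $p_0$ nor $f$ has any incoming internal transition from $Q$. I add a single call (push) letter $c$ and two stack symbols $a,b$: reading $c$ in any state moves to $f$ and pushes $a$ if the current state lies in $S \cup \{p_0,f\}$, and pushes $b$ otherwise. Declaring $p_0$ to behave like an $S$-state is what breaks the symmetry between $S$ and its complement.

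The correctness argument splits as follows. Since $M$ is visibly, all runs share the same stack height after any input, so in the same stack model the only remaining freedom is in the pushed symbols. No internal word alone can synchronize $M$, because the sink $p_0$ has no incoming edges and hence stays present alongside at least one other state; any synchronizing word must therefore use $c$, after which every run sits in the single state $f$ and only stack contents can still differ. Reading the first $c$ after an internal prefix $w$ yields stack $\bot a$ on runs whose current state is in $S\cup\{p_0,f\}$ and $\bot b$ on runs whose state is in $Q\setminus S$; these agree on all runs exactly when $\delta(Q,w)\subseteq S$, while the degenerate ``all-$b$'' alternative $\delta(Q,w)\subseteq Q\setminus S$ is impossible because $p_0\in S$ is always active. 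Thus $wc$ synchronizes $M$ in the same stack model precisely when $w$ witnesses a YES-instance of \textsc{DFA-Sync-Into-Subset}.

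The step I expect to be the main obstacle is ruling out spurious synchronizing words. One must check that further occurrences of $c$ only push the uniform symbol $a$ from $f$ on top, so a mismatch created at the first $c$ is never repaired, and that interleaving internal letters after a $c$ merely keeps all runs parked in $f$. The crucial point is that, with $\Sigma_{\text{ret}}=\emptyset$, there is no return letter that could pop an offending symbol and grant a second attempt; this reset-and-retry loophole is exactly what makes the \emph{Return} variant tractable, which is why forbidding returns is essential here. Once these points and the routine verification that $M$ is a genuine DVPDA built in polynomial time are in place, the reduction is complete; since the produced instances have an empty return alphabet, the same construction in fact establishes \PSPACE-hardness already for the NoReturn variant, and hence for \textsc{Sync-DVPDA-Same}.
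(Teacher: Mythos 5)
Your proposal is correct and follows essentially the same route as the paper's proof: a reduction from \textsc{DFA-Sync-Into-Subset} producing a DVPDA with $\Sigma_{\text{ret}}=\emptyset$, a single call letter that pushes one of two distinct stack symbols depending on whether the current state lies in $S$, and an absorbing sink, so that a prematurely read call letter creates a stack mismatch that can never be repaired. The differences are cosmetic (you use two fresh states where the paper uses only one, and your call letter sends every state to the sink while the paper leaves non-$S$ states in place); only note the slip ``$p_0\in S$'', which should say that $p_0$ is \emph{treated like} an $S$-state by the call letter.
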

\begin{proof}
	We reduce from 
\textsc{DFA-Sync-Into-Subset}. Let $A = (Q, \Sigma, \delta)$ be a DFA and $S \subseteq Q$. We construct from $A$ a DVPDA $M = (Q \cup \{q_S\}, \Sigma_{\text{call}} \cup \Sigma_{\text{int}} \cup \Sigma_{\text{ret}}, \{\smiley{}, \frownie{},\bot\}, \delta' = \delta'_{\text{c}} \cup \delta'_\text{i} \cup \delta'_\text{r}, \bot)$ with $q_S \notin Q$, $\Sigma_{\text{call}} = \{a\}$, $\Sigma_{\text{int}} = \Sigma$, $\Sigma_{\text{ret}} = \emptyset$ and $\Sigma_{\text{call}} \cap \Sigma_{\text{int}} = \emptyset$.
	The transition function $\delta'_\text{i}$ agrees with $\delta$ on all letters in $\Sigma_{\text{int}}$. For $q_S$ we set $\delta'_\text{c}(q_S, a) = (q_S, \smiley{})$ and $\delta'_\text{i}(q_S, \sigma) = q_S$ for all $\sigma \in \Sigma_{\text{int}}$. For $q \in S$, we set $\delta'_\text{c}(q, a) = (q_S, \smiley{})$, and for $q \notin S$, $\delta'_\text{c}(q, a) = (q, \frownie{})$.
	
	Note that $q_S$ is a sink-state and can only be reached from states in $S$ with a transition by the call-letter $a$. For states not in~$S$, the input letter $a$ pushes an $\frownie{}$ on the stack which cannot be pushed to the stack by any letter on a path starting in $q_S$. Hence, in order to synchronize~$M$ in the same stack model, a letter $a$ might only and must be read in a configuration where only states in $S \cup \{q_S\}$ are active. Every word $w \in \Sigma_{\text{int}}^*$ that brings $M$ in such a configuration also synchronizes $Q$ in $A$ into the set $S$.
\end{proof}
From the proof of Theorem~\ref{Sync-DVPDA-Same-PSPACE}, we can conclude the next results by observing that a DVPDA without any return letter cannot make any turn.
\begin{corollary}
	\label{cor:DVPDA-Same-NoReturn-PSPACE}
	\textsc{Sync-DVPDA-Same-NoReturn} and \textsc{0-Turn-Sync-DVPDA-Same} are \PSPACE-hard.
\end{corollary}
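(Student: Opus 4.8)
The plan is to reuse the reduction from \textsc{DFA-Sync-Into-Subset} constructed in the proof of Theorem~\ref{Sync-DVPDA-Same-PSPACE} essentially verbatim, and to argue that the DVPDA $M$ it produces is already a legal instance of \emph{both} problems named in the corollary. First I would recall the key structural feature of that construction: it sets $\Sigma_{\text{ret}} = \emptyset$, so $M$ is a DVPDA with no return letters. This immediately settles the NoReturn variant: the equivalence ``$w$ synchronizes $M$ in the same stack model $\iff$ $w$ drives $A$ into $S$'' was established without any use of return letters, and the side condition $\Sigma_{\text{ret}} = \emptyset$ required by \textsc{Sync-DVPDA-Same-NoReturn} holds by construction. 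Hence the same polynomial-time reduction witnesses \PSPACE-hardness of \textsc{Sync-DVPDA-Same-NoReturn}.

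For \textsc{0-Turn-Sync-DVPDA-Same} I would exploit the very same feature. Since $\Sigma_{\text{ret}} = \emptyset$, every transition of $M$ either pushes a symbol (for the single call letter $a \in \Sigma_{\text{call}}$, which writes $\smiley{}$ or $\frownie{}$) or leaves the stack unchanged (for the internal letters in $\Sigma_{\text{int}}$); no transition ever decreases the stack height. Thus the sequence of configurations induced by any input word is a single upstroke, i.e.\ it consists of at most one stroke and makes zero turns. Consequently every word synchronizing $M$ in the same stack model is automatically a 0-turn synchronizing word, so the 0-turn constraint is vacuous and the equivalence with \textsc{DFA-Sync-Into-Subset} carries over unchanged, yielding \PSPACE-hardness of \textsc{0-Turn-Sync-DVPDA-Same}.

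I do not expect a genuine obstacle here: the hardness instance was already designed with an empty return alphabet, which simultaneously forces the absence of returns and of turns. The only point worth double-checking is the interaction of the push letter with the bottom symbol, but since $a$ only ever stacks $\smiley{}$ or $\frownie{}$ on top and never pops, the stack height stays monotonically non-decreasing along every run, confirming that no turn can occur.
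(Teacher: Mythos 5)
Your proposal is correct and matches the paper's own argument: the paper likewise derives the corollary by reusing the reduction of Theorem~\ref{Sync-DVPDA-Same-PSPACE} unchanged, noting that its construction has $\Sigma_{\text{ret}}=\emptyset$ and that a DVPDA without return letters can never make a turn, so both the NoReturn condition and the 0-turn condition are automatically satisfied.
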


\noindent 
In contrast with the two previous results, \textsc{Sync-DVPDA-Same} is solvable in polynomial time if we have the promise that $\Sigma_{\text{ret}}\neq \emptyset$.
\begin{theorem}
	\label{thm:DVPDA-Same-Return-P}
	\textsc{Sync-DVPDA-Same-Return} is in \PTIME.
\end{theorem}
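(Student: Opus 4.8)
The plan is to show that, under the promise $\Sigma_{\text{ret}}\neq\emptyset$, the same stack model collapses to the empty stack model, and then to invoke Theorem~\ref{thm:Empty-P}. Concretely, I would prove that a DVPDA $M$ with $\Sigma_{\text{ret}}\neq\emptyset$ is synchronizable in the same stack model if and only if it is synchronizable in the empty stack model. Since \textsc{Sync-DVPDA-Empty} is decidable in polynomial time, running that algorithm on the very same input $M$ then decides \textsc{Sync-DVPDA-Same-Return}, placing it in \PTIME.

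One direction is immediate: an empty-stack synchronizing word is in particular a same-stack synchronizing word, since the condition $\upsilon=\upsilon'=\bot$ of the empty stack model is a special case of the condition $\upsilon=\upsilon'$ of the same stack model.

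For the converse, suppose $w$ synchronizes $M$ in the same stack model, so that for every state $q$ we have $(q,\bot)\mto[w](\overline{q},\upsilon)$ with a common final state $\overline{q}$ and a common stack content $\upsilon$. The idea is to flush $\upsilon$ down to $\bot$ by appending return letters. Fix any $\sigma_r\in\Sigma_{\text{ret}}$, which exists by the promise. Because all runs now sit in the \emph{identical} configuration $(\overline{q},\upsilon)$ and $M$ is deterministic, reading any common suffix keeps the runs in lockstep; in particular each occurrence of $\sigma_r$ pops the current top symbol simultaneously on all runs. Since in a DVPDA the stack height along a run is fixed by the input word, $|\upsilon|$ is the same in every run, so after reading $\sigma_r^{|\upsilon|-1}$ every run reaches one and the same configuration whose stack is exactly $\bot$. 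Hence $w\,\sigma_r^{|\upsilon|-1}$ synchronizes $M$ in the empty stack model.

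I do not expect a genuine obstacle; the only point to check carefully is the behavior of a return letter on the bottom symbol. By definition, when $\bot$ is on top of the stack a return letter only reads $\bot$ and does not pop it. This is precisely why appending return letters terminates at stack content $\bot$ rather than destroying $\bot$ and getting stuck, and it also shows that reading more than $|\upsilon|-1$ copies of $\sigma_r$ is harmless, as all runs stay in a common configuration throughout. The whole content of the argument is thus the observation that a single available return letter suffices to normalize any common stack down to $\bot$; this is exactly what distinguishes the present case from \textsc{Sync-DVPDA-Same-NoReturn}, whose \PSPACE-hardness (Corollary~\ref{cor:DVPDA-Same-NoReturn-PSPACE}) stems from the impossibility of flushing the stack when $\Sigma_{\text{ret}}=\emptyset$.
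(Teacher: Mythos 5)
Your proposal is correct and follows essentially the same route as the paper: a reduction to \textsc{Sync-DVPDA-Empty} via the identity map, where the key direction extends a same-stack synchronizing word $w$ by a suffix of return letters that flushes the common stack down to $\bot$, using determinism and the fact that all runs sit in identical configurations after $w$. Your version merely spells out the details (the explicit suffix $\sigma_r^{|\upsilon|-1}$ and the behavior of return letters on $\bot$) that the paper leaves implicit.
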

%
\begin{proof}
	We prove the claim by straight reducing to \textsc{Sync-DVPDA-Empty} with the identity function.
	If a DVPDA $M$ with $\Sigma_{\text{ret}}\neq \emptyset$ can be synchronized in the same stack model with a synchronizing word $w$, then $w$ can be extended to $ww'$ where $w'\in \Sigma_{\text{ret}}^*$ empties the stack. As~$M$ is deterministic and complete, $w'$ is defined on all states. As after reading~$w$, the stack content on all paths is the same, reading $w'$ extends all paths with the same sequence of states.
	Conversely, a word $w$ synchronizing a DVPDA $M$ with $\Sigma_{\text{ret}} \neq \emptyset$ in the empty stack model also synchronizes $M$ in the same stack model. 
\end{proof}
\noindent The arbitrary stack model requires the most interesting construction in the following proof.
\begin{theorem}
	\label{thm:DVPA-Arb-PSPACE}
	\textsc{Sync-DVPDA-Arb} is \PSPACE-hard.
\end{theorem}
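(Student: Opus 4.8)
The plan is to reduce from \textsc{DFA-Sync-Into-Subset}, reusing the skeleton of the proof of Theorem~\ref{Sync-DVPDA-Same-PSPACE}. Given a DFA $A=(Q,\Sigma,\delta)$ and $S\subseteq Q$, I would let the internal letters of the constructed DVPDA simulate $A$ on $Q$, and add a call letter $a$ that, exactly as before, records $S$-membership on the stack: it pushes a $\smiley$ from a state in $S$ and a $\frownie$ from a state in $Q\setminus S$. The decisive new ingredient is a return letter $r$ that reads this mark and turns a \emph{stack} difference into a \emph{state} difference, which is necessary because, unlike in the same stack model, in the arbitrary model a ``bad'' symbol sitting on the stack is by itself invisible to the synchronization condition. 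Concretely, popping a $\smiley$ should move a run towards the designated synchronizing state, whereas popping a $\frownie$ must fail to do so. This already explains why the problem is hard only in the presence of return letters (recall that \textsc{DVPDA-NoReturn-Arb} is easy): without returns the stack is never inspected and the question degenerates to ordinary DFA synchronization.

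First I would fix the behaviour of the target and of the $\frownie$-branch so that the reduction forces a single \emph{simultaneous} passage of all runs through $S$, which is precisely what \textsc{DFA-Sync-Into-Subset} asks for. Two opposite pitfalls must be avoided. If the target is an absorbing sink, good runs are captured permanently and one can fire $a\,r$ repeatedly, driving the states into $S$ only a few at a time; this synchronizes the automaton under a strictly weaker, polynomially checkable condition, so the target must instead be \emph{non-absorbing}, so that all runs can coincide only at the very last letter, which then certifies that the active states were all inside $S$. On the other hand, the mechanism that lets runs leave the target must not secretly collapse distinct runs into a single state, nor may any auxiliary state be a second sink, since every state of the automaton is active at the start of synchronization and must therefore still be able to reach the target. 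Balancing ``non-absorbing'' against ``no spurious merges and no unreachable auxiliary states'' is exactly what makes the gadget intricate. In addition I would block stack-free synchronizations — internal letters alone simulate $A$, which may itself be synchronizing — by a small gadget of states that are frozen under all internal letters and can be merged only by a $\smiley$-pop, thereby forcing every synchronizing word to contain at least one genuine push/pop.

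With such a design the forward direction is routine: from an into-subset word $u$ with $\delta(Q,u)\subseteq S$ one reads a short prefix that aligns the auxiliary states, then $u$, and finally the matched pair $a\,r$, whereupon every run pops a $\smiley$ and lands on the common target. I expect the main obstacle to be the backward direction, i.e.\ proving that synchronizability implies the into-subset property. The difficulty is that a synchronizing word may freely interleave internal letters with many push/pop rounds, and one must show that no such combination can synchronize the automaton unless the states can genuinely be collected inside $S$ at one and the same moment; in particular one has to rule out the ``merge-and-restart'' shortcuts that arise from runs bouncing off the non-absorbing target and re-entering the simulation. Making the target dynamics and the return branching tight enough to exclude all these shortcuts, while still keeping every auxiliary state reachable to the target, is the delicate heart of the construction and the reason the arbitrary stack model requires more care than the same stack model.
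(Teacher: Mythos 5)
Your text is a plan, not a proof: the construction is never actually given, and you yourself defer the decisive part (``the target dynamics and the return branching tight enough to exclude all these shortcuts \dots\ is the delicate heart of the construction''). That deferred part is exactly where naive instantiations of your skeleton collapse. For example, if popping a $\frownie$ sends a run to a frozen auxiliary state (it must go \emph{somewhere} that is neither a second sink nor a spurious merge point), and auxiliary states push $\smiley$ on the call letter (they must, or they could never reach the target by a $\smiley$-pop), then the word $(ar)^2$ synchronizes every automaton produced by the reduction, independently of whether $Q$ can be driven into $S$: after the first $ar$ the $S$-runs sit in the target and the others in the auxiliary gadget, and the second $ar$ merges the gadget into the target as well. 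Repairing this forces you back into the tension you describe between ``non-absorbing target'' and ``every auxiliary state must still reach the target,'' and you give no concrete resolution of it; so the backward direction (synchronizability implies the into-subset property), which is the entire content of the hardness claim, remains unproven. A reviewer cannot verify a reduction whose gadget is unspecified precisely at the point where all the candidate gadgets one writes down fail.

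For comparison, the paper avoids this thicket entirely by reducing from \textsc{DFA-Sync-From-Subset} rather than \textsc{DFA-Sync-Into-Subset}, with a construction in which the difficulty you are fighting does not arise. There, $\Sigma_{\text{call}}=\Sigma$ and each call letter $\sigma$ simulates the DFA while pushing the \emph{pre-image state} onto the stack, $\delta'_{\text{c}}(q,\sigma)=(\delta(q,\sigma),q)$; the single return letter $r$ pops and \emph{undoes} the last call transition when the stack is nonempty, $\delta'_{\text{r}}(q,r,p)=p$, and on the empty stack it maps $Q$ onto $S$ (identity on $S$, everything else to a fixed $t\in S$). The backward direction then becomes a clean rewriting argument: any occurrence of $\sigma r$ in a synchronizing word can be deleted, so one may assume the word lies in $\{r\}^*\Sigma_{\text{call}}^*$, and then $r^k v$ synchronizes iff $v$ synchronizes the set $S$ in the DFA. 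No sink-versus-non-sink balancing, no auxiliary freezing gadget, and no ``merge-and-restart'' analysis is needed. If you want to persist with an \textsc{Into-Subset}-based reduction you would have to supply and verify the missing gadget; as it stands, your argument has a genuine gap.
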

\begin{proof}
	We give a reduction from the \PSPACE-complete problem \textsc{DFA-Sync-From-Subset}. 
	Let $A= (Q, \Sigma, \delta)$ be a DFA with $S \subseteq Q$. We construct from $A$ a DVPDA $M = (Q, \Sigma_{\text{call}} \cup \Sigma_{\text{int}} \cup \Sigma_{\text{ret}}, Q \cup \{\bot\}, \delta'=\delta'_\text{c} \cup \delta'_\text{i}\cup \delta'_\text{r}, \bot)$ where all unions in the definition of $M$ are disjoint. 
	Let $\Sigma_{\text{call}}=\Sigma$, $\Sigma_{\text{int}} = \emptyset$, and $\Sigma_{\text{ret}} = \{r\}$ with $r \notin \Sigma$.
	
	For states $s \in S$ we set $\delta'_\text{r}(s, r, \bot) = s$ and for states $q \in Q\backslash S$ we set $\delta'_\text{r}(q, r, \bot) = t$ for some arbitrary but fixed $t \in S$.
	For states $p, q \in Q$ we set $\delta'_\text{r}(q, r, p) = p$.
	
	For each call letter $\sigma \in \Sigma_{\text{call}}$ we set for $q \in Q$, $\delta'_\text{c}(q, \sigma) = (\delta(q, \sigma), q)$.
	
	First, assume $w$ is a word that synchronizes the set $S$ in the DFA $A$. Then, it can easily be observed that $rw$ is a synchronizing word for $M$ in the arbitrary stack model.
	
	Now, assume $w$ is a synchronizing word for $M$ in the arbitrary stack model. If $w\in \Sigma_{\text{call}}^*$, then $w$ is also a synchronizing word for $A$ and especially synchronizes the set $S$ in $A$. (*)
	Next, assume $w$ contains some letters $r$. 
The action of $r$ is designed such that it maps $Q$ to the set $S$ if applied to an empty stack and otherwise gradually undoes the transitions performed by letters from $\Sigma_{\text{call}}$. This is possible as each letter $\sigma \in \Sigma_{\text{call}}$ stores its pre-image on the stack when $\sigma$ is applied. Further, $r$ acts as the identity on the states in $S$ if applied to an empty stack. Hence, whenever the stacks are empty while reading some word, all states in $S$ are active. 

Hence, if $\sigma r$ is a subword of a synchronizing word $w=u\sigma rv$ of $M$, with $\sigma \in \Sigma_{\text{call}}$, then $w'=uv$ is also a synchronizing word of $M$. This justifies the set of rewriting rules $R=\{\sigma r\to\varepsilon\mid \sigma \in \Sigma_{\text{call}}\}$.
Now, consider a synchronizing word $w$  of $M$ where none of the rewriting rules from $R$ applies. Hence, $w\in\{r\}^*\Sigma_{\text{call}}^*$.
By (*), 
$w=r^kv$ with $k>0$ and $v\in\Sigma_{\text{call}}^*$. Then, $w'=rv$ is also a synchronizing word of $M$, because for all states $q \in Q$, $M$ is in the same configuration after reading $r$, starting in configuration $(q,\bot)$, as after reading $rr$. But as only (and all) states from $S$ are active after reading $r$, $v$ is also a word in $\Sigma^*$ that synchronizes the set $S$ in $A$.
\end{proof}
Observe that in the construction above, $\Sigma_{\text{ret}}\neq \emptyset$ for all input DFAs.
 The next corollary follows from Theorem~\ref{thm:DVPA-Arb-PSPACE} and should be observed together with the next theorem in contrast to Theorem~\ref{thm:DVPDA-Same-Return-P} and Corollary~\ref{cor:DVPDA-Same-NoReturn-PSPACE}.
\begin{corollary}
	\textsc{Sync-DVPDA-Arb-Return} is \PSPACE-hard.
\end{corollary}
\begin{theorem}
	\textsc{Sync-DVPDA-Arb-NoReturn} $\equiv$ \textsc{DFA-Sync}.
\end{theorem}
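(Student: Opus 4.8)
The plan is to establish the stated equivalence by exhibiting mutual (log\-space) reductions, the whole argument resting on a single observation: in a DVPDA with $\Sigma_{\text{ret}} = \emptyset$ the stack content can never influence a state transition. Indeed, without return letters every transition is either a call transition $\delta_{\text{c}} \colon Q \times \Sigma \to Q \times (\Gamma \setminus \{\bot\})$ or an internal transition $\delta_{\text{i}} \colon Q \times \Sigma \to Q$, and in both cases the resulting state depends only on the current state and the input letter, never on the symbol on top of the stack. Since the arbitrary stack model demands only that all runs reach a common state and imposes no condition whatsoever on the stack, the stack becomes entirely irrelevant to the synchronization question, and the state dynamics of $M$ are exactly those of a DFA.

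For the direction from \textsc{Sync-DVPDA-Arb-NoReturn} to \textsc{DFA-Sync}, I would take an input DVPDA $M = (Q, \Sigma_{\text{call}} \cup \Sigma_{\text{int}}, \Gamma, \delta, \bot)$ with $\Sigma_{\text{ret}} = \emptyset$ and define a DFA $A = (Q, \Sigma, \delta_A)$ on the same state set by projecting $M$'s transitions onto their state component: set $\delta_A(q, \sigma) = q'$ whenever $\delta_{\text{c}}(q, \sigma) = (q', \gamma)$ for $\sigma \in \Sigma_{\text{call}}$, and $\delta_A(q, \sigma) = \delta_{\text{i}}(q, \sigma)$ for $\sigma \in \Sigma_{\text{int}}$. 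By the observation above, for every word $w$ and every start state $q$ the state reached by the run of $M$ from $(q, \bot)$ on $w$ equals $\delta_A(q, w)$, so $w$ synchronizes $M$ in the arbitrary stack model if and only if $|\delta_A(Q, w)| = 1$; hence $M$ is a yes-instance exactly when $A$ is.

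For the converse, from \textsc{DFA-Sync} to \textsc{Sync-DVPDA-Arb-NoReturn}, I would embed a given DFA $A = (Q, \Sigma, \delta)$ as a DVPDA $M$ by declaring every letter internal, i.e.\ $\Sigma_{\text{int}} = \Sigma$ and $\Sigma_{\text{call}} = \Sigma_{\text{ret}} = \emptyset$, and setting $\delta_{\text{i}} = \delta$. This $M$ trivially satisfies $\Sigma_{\text{ret}} = \emptyset$, its stack never changes, and its state behaviour coincides with that of $A$, so a word synchronizes $M$ in the arbitrary stack model exactly when it synchronizes $A$. There is no genuine obstacle in either direction: the only point demanding care is verifying that the stack really is inert in the absence of return letters, which follows immediately from the signatures of $\delta_{\text{c}}$ and $\delta_{\text{i}}$. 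Both constructions are computable in logarithmic space, which yields the claimed equivalence; combined with the polynomial-time solvability of \textsc{DFA-Sync}, it places \textsc{Sync-DVPDA-Arb-NoReturn} in \PTIME, matching the corresponding entry of Table~\ref{tab:VPA-results}.
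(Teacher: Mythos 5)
Your proposal is correct and follows essentially the same route as the paper: the paper likewise observes that without return letters the transitions cannot depend on the stack, reassigns the call letters as internal letters (i.e., projects away the stack component) to obtain an equivalent DFA, and notes that the converse embedding is trivial. Your write-up merely spells out the projection and the trivial converse in more detail.
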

\begin{proof}
	Let $M$ be a DVPDA with empty set of return symbols.
	As there is no return-symbol, the transitions of $M$ cannot depend on the stack content. Hence, we can redistribute the symbols in $\Sigma_{\text{call}}$ into $\Sigma_{\text{int}}$ and obtain a DFA. The converse is trivial.
\end{proof}
If we move from deterministic visibly push-down automata to  even more restricted classes, like deterministic very visibly push-down automata or deterministic visibly counter automata, the three stack models do no longer yield synchronization problems with different complexities. Instead, all three models are equivalent, as stated next. Hence, their synchronization problems can be solved by the pair-argument presented in Theorem~\ref{thm:Empty-P} in polynomial time.
\begin{theorem}\label{Sync-equivalences}
	\textsc{Sync-DVCA-Empty} $\equiv$ \textsc{Sync-DVCA-Same} $\equiv$ \textsc{Sync-DVCA-Arb}.\\
	\textsc{Sync-DVVPDA-Empty} $\equiv$ \textsc{Sync-DVVPDA-Same} $\equiv$ \textsc{Sync-DVVPDA-Arb}.
\end{theorem}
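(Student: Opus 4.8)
The plan is to establish both chains of equivalences by showing that the three stack models coincide on these restricted automata classes, so that the polynomial-time pair-argument from Theorem~\ref{thm:Empty-P} applies uniformly. The key structural observation is that for a DVVPDA (and hence for a DVCA, which is a special case), the \emph{entire} stack content after reading a word $w$ is completely determined by $w$ alone, independently of the starting state. This is because in a DVVPDA the stack height is fixed by the partition of $\Sigma$ (the visibly property), and additionally the symbol pushed by any call letter $\sigma$ is the same regardless of the current state, by the defining condition $\gamma_p = \gamma_q$ for all $p, q \in Q$. Thus if $(p, \bot) \mto[w] (\overline{q}, \upsilon)$ and $(p', \bot) \mto[w] (\overline{q}, \upsilon')$, then $\upsilon = \upsilon'$ automatically.

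First I would make this determinacy claim precise by induction on $|w|$: at each step, a call letter appends a state-independent symbol to all stacks, an internal letter leaves all stacks unchanged, and a return letter pops the top symbol from all stacks (or reads $\bot$ without popping, uniformly across all runs, since stack height depends only on $w$). Hence all runs starting from an empty stack maintain identical stack contents at every prefix of $w$. This immediately gives the inclusion chain $\textsc{Sync-DVVPDA-Empty} \implies \textsc{Sync-DVVPDA-Same} \implies \textsc{Sync-DVVPDA-Arb}$ at the level of \emph{yes}-instances viewed as identity reductions: any word that synchronizes states and empties all stacks in particular leaves all stacks identical, and any word leaving stacks identical in particular synchronizes states with arbitrary stack content.

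The substantive direction is the reverse implication $\textsc{Sync-DVVPDA-Arb} \implies \textsc{Sync-DVVPDA-Empty}$, which I would argue as follows. Suppose $w$ synchronizes the states in the arbitrary stack model, bringing every run to the common state $\overline{q}$. By the determinacy claim, all runs share a common stack content $\upsilon$ after reading $w$. Now I append a word $w'$ that empties this common stack: since the stack content is identical on all runs and the automaton is deterministic and complete, applying return letters (using the promise $\Sigma_{\text{ret}} \neq \emptyset$, or else observing that without return letters no call letter can ever appear in a state-synchronizing context) drives all runs through the \emph{same} sequence of states and pops the shared stack symbols in lockstep. Consequently $ww'$ synchronizes $M$ in the empty stack model. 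This is essentially the same extension argument already used in the proof of Theorem~\ref{thm:DVPDA-Same-Return-P}, now applicable because the very-visibly condition guarantees the stacks are identical rather than merely of equal height.

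The main obstacle I anticipate is handling the corner case $\Sigma_{\text{ret}} = \emptyset$ cleanly, where the stack-emptying word $w'$ does not exist. In that situation I would argue that a synchronizing word in the arbitrary model cannot usefully contain any call letter either, since pushed symbols can never be removed and play no role once states agree; one reduces to the internal-letter-only fragment, which behaves as a DFA, and there the empty, same, and arbitrary models trivially coincide. With both directions established, all three DVVPDA problems are interreducible via the identity map, and the DVCA equivalences follow as the special case $|\Gamma \setminus \{\bot\}| = 1$, completing both lines of the statement.
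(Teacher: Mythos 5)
Your main case ($\Sigma_{\text{ret}} \neq \emptyset$) is correct and is essentially the paper's own argument: the very visibly condition forces identical stack contents on all runs at every step, so the same and arbitrary models coincide, and any word synchronizing the states can be extended by return letters, executed in lockstep on all runs, to empty the common stack, giving the empty-stack model as well.

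The gap is in your corner case $\Sigma_{\text{ret}} = \emptyset$. Your claim that a synchronizing word in the arbitrary model ``cannot usefully contain any call letter'' (and the parenthetical ``no call letter can ever appear in a state-synchronizing context'') is false: call letters still perform state transitions, and these can be indispensable. Concretely, take $Q = \{1,2\}$, $\Sigma_{\text{call}} = \{a\}$ with $\delta_{\text{c}}(1,a) = \delta_{\text{c}}(2,a) = (1,\gamma)$, and $\Sigma_{\text{int}} = \{b\}$ with $b$ swapping the two states. Then $a$ synchronizes this DVCA in the arbitrary (and same) stack model, but no word synchronizes it in the empty stack model: words avoiding $a$ never merge the states, and words containing $a$ leave a nonempty stack that can never be popped. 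This example also refutes your concluding claim that the three problems are ``interreducible via the identity map'': for $\Sigma_{\text{ret}} = \emptyset$ the sets of yes-instances genuinely differ, so the identity is not a reduction from the arbitrary (or same) model to the empty model. What the theorem asserts, and what the paper proves, is equivalence of the decision problems, and in this case the reduction must modify the automaton: reassign every letter of $\Sigma_{\text{call}}$ to $\Sigma_{\text{int}}$, dropping the pushes. Since without return letters no transition can depend on the stack content, this preserves the state dynamics exactly, and the resulting automaton is synchronizable in the empty stack model if and only if the original one is synchronizable in the same/arbitrary model. With that repair, your proof coincides with the paper's.
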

\begin{proof}
	First, note that every DVCA is also a DVVPDA.
	If for a DVVPDA $\Sigma_{\text{ret}} \neq \emptyset$, then we can empty the stack after synchronizing the state set, as the very visibly conditions ensures that the contents of the stacks on all runs coincide. As the automaton is deterministic, all transitions for letters in $\Sigma_{\text{ret}}$ are defined on each state. As the stack content on all runs coincides in every step, the arbitrary stack model is identical to the same stack model and hence equivalent to the empty stack model.
	If $\Sigma_{\text{ret}} = \emptyset$, then we can reassign $\Sigma_{\text{call}}$ to $\Sigma_{\text{int}}$ in order to reduce from the same-stack and arbitrary stack to the empty stack variant, as transitions cannot depend on the stack content which is again the same on all runs due to the very visibly condition.
\end{proof}
\section{Restricting the Number of Turns Makes Synchronization Harder}
\label{sec:FiniteTurn}
We are now restricting the number of turns a synchronizing word may cause on any run. Despite the fact that we are hereby restricting the considered model even further, the synchronization problem becomes even harder, 
in contrast to the previous section.
\begin{theorem}
	\label{thm:n-turn-DVCA-hard}
	For every fixed $n\in \mathbb{N}$ with $n > 0$, the problems \textsc{$n$-Turn-Sync-DVCA-Same} and \textsc{$n$-Turn-Sync-DVCA-Arb} are \PSPACE-hard.
\end{theorem}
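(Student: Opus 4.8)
The plan is to reduce from the \PSPACE-complete problem \textsc{DFA-Sync-From-Subset}. First I would record a simplification that is special to counter automata and that handles both stack models at once: since $|\Gamma\setminus\{\bot\}|=1$, on every input the stack content is just a block of one fixed symbol whose length equals the stack height, and this height is determined by the word alone. Hence all runs share a single height profile, so the stack contents always coincide (settling the same stack model automatically) and the number of turns is a property of the word rather than of the individual run. Thus for DVCAs both \textsc{$n$-Turn-Sync-DVCA-Same} and \textsc{$n$-Turn-Sync-DVCA-Arb} reduce to the same question: \emph{is there a word with at most $n$ turns that synchronizes all states?}

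For the core gadget, consider first $n=1$. Given $A=(Q,\Sigma,\delta)$, $S\subseteq Q$ and a fixed $t\in S$, I would build a DVCA on state set $Q$ with stack symbol $X$, one call letter $c$, and one return letter $r_\sigma$ for each $\sigma\in\Sigma$. The call letter resets while pushing, $\delta'_{\text{c}}(q,c)=(q,X)$ for $q\in S$ and $\delta'_{\text{c}}(q,c)=(t,X)$ for $q\notin S$, so that $c$ maps $Q$ onto $S$ and is the identity on $S$. Each return letter simulates $A$ while popping above the bottom and does nothing at the bottom: $\delta'_{\text{r}}(q,r_\sigma,X)=\delta(q,\sigma)$ and $\delta'_{\text{r}}(q,r_\sigma,\bot)=q$. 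The intended synchronizing word is an upstroke of $c$'s — collapsing $Q$ to $S$ and raising the counter above $|v|$ — followed by a downstroke of return letters spelling a word $v$ that synchronizes $S$ in $A$; this is a $1$-turn synchronizing word precisely when $S$ is synchronizable in $A$. Soundness for $n=1$ is forced by the stroke structure: during the single upstroke the counter only rises, so the only return moves available are at the bottom (identities) and the sole state change is the idempotent reset, while during the following downstroke no $c$ may be read, so only $A$ is simulated. Hence every $1$-turn synchronizing word must reset $Q$ into $S$ and then synchronize $S$ by a $\Sigma$-word, giving the desired equivalence.

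The step I expect to be the main obstacle is soundness for general fixed $n>0$. With a larger turn budget a word may descend to the bottom several times and apply the reset repeatedly, interleaving resets with $A$-simulations. This is strictly stronger than synchronizing $S$ by one $\Sigma$-word: for $Q=\{1,2,3,t\}$, $S=\{1,2,t\}$ and $a$ with $1\mapsto 3,\ 2\mapsto 1,\ 3\mapsto 3,\ t\mapsto t$, the set $S$ is not synchronizable in $A$, yet the alternation $\{1,2,t\}\xrightarrow{a}\{3,1,t\}\xrightarrow{\text{reset}}\{1,t\}\xrightarrow{a}\{3,t\}\xrightarrow{\text{reset}}\{t\}$ collapses $Q$; so the bare construction becomes unsound once $n\ge 2$. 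To repair this I would enlarge the state set by an $O(n)$-sized turn counter that advances at each detected change of push/pop direction and gate the reset so that it is productive exactly once, forcing every further turn to carry a single uninterrupted $A$-simulation. The technical heart of the proof is then to show that this gadget admits no synchronization other than ``reset once, then synchronize $S$'', i.e.\ that the additional turns grant no extra collapsing power; once this is established the reduction is polynomial (as $n$ is constant and every constructed automaton is a DVCA), yielding \PSPACE-hardness of \textsc{$n$-Turn-Sync-DVCA-Same} and \textsc{$n$-Turn-Sync-DVCA-Arb} for every fixed $n>0$ simultaneously in both stack models.
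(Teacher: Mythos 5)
Your $n=1$ reduction is correct, and it takes a genuinely different route from the paper: you reduce from \textsc{DFA-Sync-From-Subset}, simulate the DFA on the pop transitions, and use the single call letter as a reset of $Q$ onto $S$, so that the two-stroke shape of a $1$-turn word forces exactly ``reset, then simulate''. Your preliminary observation that for DVCAs the same and arbitrary stack models coincide (the height profile, hence the one-symbol stack content, is determined by the word alone) is also correct and matches a remark in the paper's proof. The genuine gap is everything beyond $n=1$: the theorem claims hardness for \emph{every} fixed $n>0$, you correctly show (with a valid counterexample) that your bare construction becomes unsound once the turn budget allows interleaving resets with simulation steps, and the proposed repair --- an $O(n)$ turn counter plus a gate making the reset productive exactly once --- is left as a sketch. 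No transitions are given, and the soundness claim (``no synchronization other than reset once, then synchronize $S$'') is precisely the step you defer. As written, the proposal proves \PSPACE-hardness only for $n=1$.

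For what it is worth, the repair is simpler than you suggest and needs no turn counter: double the state set to $Q\times\{0,1\}$ and let only the first call letter perform the collapsing reset, every later $c$ pushing but acting as the identity on states. Then the state effect of \emph{any} word, with any number of turns, factors as ``identities, one reset of $Q$ onto $S$, then the simulation of a single $A$-word on $S$ spread over the downstrokes'', which is sound for every $n$ simultaneously. The paper instead sidesteps the issue entirely by a different reduction, from \textsc{DFA-Sync-Into-Subset}: the DFA is simulated on \emph{internal} letters, so the DFA part never touches the stack, and a chain of stall states $q_{\text{stall}_0},\dots,q_{\text{stall}_n}$ leading to a sink $q_{\text{sync}}$ can only be traversed by popping on a non-empty stack and reading the bottom symbol in alternation. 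Traversing the chain therefore consumes the entire budget of $n$ turns, so any premature stack activity (in particular reading the call letter while a state outside $S$ is active) makes $q_{\text{sync}}$ unreachable within the budget; this yields uniform soundness for all $n>0$ without any gating argument, which is exactly the part missing from your proof.
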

\begin{proof}
	We give a reduction from the \PSPACE-complete problem \textsc{DFA-Sync-Into-Subset}.
	Let $A= (Q, \Sigma, \delta)$ be a DFA with $S\subseteq Q$. We construct from $A$ a DVCA $M = (Q \cup \{q_{\text{sync}}\} \cup \{q_{\text{stall}_i} \mid 0 \leq i \leq n\}, \Sigma_{\text{call}} \cup \Sigma_{\text{int}} \cup \Sigma_{\text{ret}}, \{1, \bot\}, \delta'=\delta'_{\text{c}} \cup \delta'_{\text{i}} \cup \delta'_{\text{r}}, \bot\})$, where all unions are disjoint. We set $\Sigma_{\text{int}} = \Sigma$, $\Sigma_{\text{call}} = \{a\}$ and $\Sigma_{\text{ret}} = \{b\}$. For all internal letters, $\delta'_\text{i}$ agrees with $\delta$ on all states in~$Q$. For the letter $a$, we set for all $q\in S$, $\delta'_\text{c}(q, a) = (q_{\text{stall}_0}, 1)$ and for all $q \in Q \backslash S$, we set $\delta'_\text{c}(q, a) = (q, 1)$.
	For $b$ we loop in every state in $Q$.
	For $q_{\text{sync}}$, we loop with every letter in $q_{\text{sync}}$ (incrementing the counter with $a$ and decrementing it with $b$).

	Let $r$ be an arbitrary but fixed state in $Q$.
	For the states $q_{\text{stall}_i}$ we set for $i < n$, $\delta'_\text{c}(q_{\text{stall}_i}, a) = (q_{\text{stall}_i}, 1)$. Further, for even index $i<n$, we set $\delta'_\text{r}(q_{\text{stall}_i}, b, 1) = q_{\text{stall}_{i+1}}$ and $\delta'_\text{r}(q_{\text{stall}_i}, b, \bot) = r$.
	For odd index $i<n$, we set $\delta'_\text{r}(q_{\text{stall}_i}, b, 1) = r$, and $\delta'_\text{r}(q_{\text{stall}_i}, b, \bot) = q_{\text{stall}_{i+1}}$.
	For  even~$n$, let $\delta'_\text{c}(q_{\text{stall}_n}, a) = (q_{\text{sync}}, 1)$, $\delta'_\text{r}(q_{\text{stall}_n}, b, 1) = r$, and $\delta'_\text{r}(q_{\text{stall}_n}, b, \bot) = r$.
	For odd~$n$, let $\delta'_\text{c}(q_{\text{stall}_n}, a) = (q_{\text{stall}_n}, 1)$, $\delta'_\text{r}(q_{\text{stall}_n}, b, 1) = r$, and $\delta'_\text{r}(q_{\text{stall}_n}, b, \bot) = q_{\text{sync}}$.
%
	All other transitions (on internal letters) act as the identity.

	Observe that the state $q_{\text{sync}}$ must be the synchronizing state of $M$, since it is a sink state. 
	In order to reach $q_{\text{sync}}$ from any state in $Q$, the automaton must pass through all the states $q_{\text{stall}_i}$ for all $0 \leq i \leq n$ by construction. Since we can only transition from a state $q_{\text{stall}_i}$ to $q_{\text{stall}_{i+1}}$ with an empty or non-empty stack in alternation, passing the $q_{\text{stall}_i}$ gadget forces $M$ to make $n$ turns. For even $n$, the last upstroke is enforced by passing from $q_{\text{stall}_n}$ to $q_{\text{sync}}$ by explicitly increasing the stack. 
	As $M$ is only allowed to make $n$ turns while reading the $n$-turn synchronizing word this implies that any of the states $q_{\text{stall}_i}$ might be visited at most once, as branching back into $Q$ by taking a transition that maps to $r$ would force $M$ to go through all states $q_{\text{stall}_i}$ again, which exceeds the allowed number of strokes. Note that only counter values of at most one are allowed in any run which is currently in a state in $q_{\text{stall}_i}$ as otherwise the run will necessarily branch back into $Q$ later on.\footnote{In some states, such as $q_{\text{stall}_n}$ for even $n$, it is simply impossible to have a higher counter value.} Especially, this is the case for $q_{\text{stall}_0}$ which ensures that each $n$-turn synchronizing word has first synchronized $Q$ into $S$ before the first letter $a$ is read, as otherwise $q_{\text{stall}_0}$ is reached with a counter value greater than 1, or $M$ has already made a turn in $Q$ and hence cannot reach $q_{\text{sync}}$ anymore.
	
	In the construction above, for odd $n$ each run enters the synchronizing state with an empty stack (*). 
	For even $n$ each run enters the synchronizing state with a counter value of 1.
	The visibly condition, or more precisely very visibly condition as we are considering DVCAs, tells us that at each time while reading a synchronizing word, the stack content of every run is identical. In particular, this is the case at the point when the last state enters the synchronizing state and hence, any $n$-turn synchronizing word for $M$ is a synchronizing word in both the arbitrary and the same stack models.
%
%
\end{proof}
By observing that in the empty stack model allowing $n$ even turns is as good as allowing $(n-1)$ turns, essentially (*) from the previous proof yields the next result.
\begin{corollary}
	\label{cor:n-turn-empty-hard}
	For every fixed $n\in \mathbb{N}$ with $n > 0$, the problem \textsc{$n$-Turn-Sync-DVCA-Empty} is \PSPACE-hard.
\end{corollary}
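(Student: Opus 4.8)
The plan is to reuse the reduction from \textsc{DFA-Sync-Into-Subset} constructed in the proof of Theorem~\ref{thm:n-turn-DVCA-hard}, splitting the argument according to the parity of $n$. The structural fact I would isolate first concerns runs in the empty stack model. Since every such run starts and ends with stack height $0$ and the height can never drop below $0$, its stack-height profile is a non-negative curve returning to $0$. The first move that leaves height $0$ must be a push (at height $0$ only internal letters, pushes, and bottom-reads are possible, none of which decreases the height), so the first nontrivial stroke is an upstroke; symmetrically, the final move back to $0$ must be a pop, so the last nontrivial stroke is a downstroke. A flat piece at height $0$ is simultaneously non-increasing and non-decreasing and can therefore be absorbed into an adjacent stroke in a \emph{minimal} stroke decomposition. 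Hence the genuine strokes alternate $U,D,U,D,\dots,U,D$, their number is even, and the number of turns of any empty-stack run is $0$ or odd. Consequently, for even $n$ the achievable turn counts under budget $n$ are exactly $\{0,1,3,\dots,n-1\}$, which coincide with those under budget $n-1$; in other words, a DVCA admits an $n$-turn empty-stack synchronizing word iff it admits an $(n-1)$-turn one. This is what ``allowing $n$ even turns is as good as allowing $n-1$ turns'' means.

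For odd $n>0$ I would argue that the reduction of Theorem~\ref{thm:n-turn-DVCA-hard} works verbatim in the empty stack model. By observation~$(*)$ made there, for odd $n$ every run enters the sink $q_{\text{sync}}$ with an empty stack, so the canonical $n$-turn word obtained when the DFA synchronizes $Q$ into $S$ is in fact an $n$-turn \emph{empty}-stack synchronizing word. For the converse, any $n$-turn empty-stack synchronizing word is in particular an $n$-turn same-stack synchronizing word (an empty stack is a special case of identical stacks), so the correctness argument of Theorem~\ref{thm:n-turn-DVCA-hard} applies and forces the input DFA to map $Q$ into $S$. Thus the constructed DVCA is $n$-turn empty-stack synchronizable iff the \textsc{DFA-Sync-Into-Subset} instance is positive, giving \PSPACE-hardness for odd $n$.

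For even $n>0$, hence $n\ge 2$ with $n-1\ge 1$ odd, I would simply invoke the observation of the first paragraph: \textsc{$n$-Turn-Sync-DVCA-Empty} and \textsc{$(n-1)$-Turn-Sync-DVCA-Empty} have identical sets of positive instances, so the identity map is a (trivial) reduction from the latter to the former. As \textsc{$(n-1)$-Turn-Sync-DVCA-Empty} is already \PSPACE-hard by the odd case, so is \textsc{$n$-Turn-Sync-DVCA-Empty}, which completes the proof for all $n>0$.

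I expect the only delicate point to be the parity claim, namely that an empty-stack run never needs an even, positive number of turns. The argument rests on two features specific to the empty stack model: the height never drops below $0$, which pins the first nontrivial stroke as an upstroke and the last as a downstroke; and flat segments at height $0$ merge into an adjacent stroke and create no spurious turn. It is therefore essential to read ``number of turns'' as the minimum over all stroke decompositions, matching the definition of an $n$-turn DVPDA, so that a trailing flat segment at height $0$ is not miscounted as an extra stroke.
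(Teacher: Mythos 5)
Your proof is correct and takes essentially the same route as the paper's: the paper likewise handles even $n$ by noting that in the empty stack model the last upstroke cannot be performed, so $n$-turn and $(n-1)$-turn synchronizability coincide, and handles odd $n$ by invoking observation $(*)$ to conclude that the reduction of Theorem~\ref{thm:n-turn-DVCA-hard} carries over to the empty stack model. Your explicit parity analysis of stroke decompositions (turn counts of empty-stack runs are $0$ or odd) is simply a more detailed justification of the paper's one-line claim.
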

\begin{toappendix}
\begin{proof}[Proof of Cor.~\ref{cor:n-turn-empty-hard}]
	Since we need to synchronize with an empty stack, for even $n$, the last upstroke cannot be performed. Hence, for even $n$, every DVCA $M$ can be synchronized by an $n$-turn synchronizing word if and only if $M$ can be synchronized by an $(n-1)$-turn synchronizing word. As for odd $n$ in the construction above, every $n$-turn synchronizing word synchronized~$M$ in the empty stack model, the claim follows from the proof in Theorem~\ref{thm:n-turn-DVCA-hard}.
\end{proof}
\end{toappendix}
\begin{corollary}
	\label{cor:N-Turn-very-visibly-hard}
	For every fixed $n\in \mathbb{N}$ with $n > 0$, the problems \textsc{$n$-Turn-Sync-DVPDA} and \textsc{$n$-Turn-Sync-DVVPDA} in the empty, same, and arbitrary stack models are  \PSPACE-hard.
\end{corollary}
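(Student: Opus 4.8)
The plan is to obtain \PSPACE-hardness for the more general automata classes essentially for free from the hardness already established for the most restricted class, namely DVCAs, by exploiting the containment $\mathrm{DVCA} \subseteq \mathrm{DVVPDA} \subseteq \mathrm{DVPDA}$. Recall that a DVCA is by definition a DVPDA with $|\Gamma\setminus\{\bot\}| = 1$, and that every DVCA is in particular a DVVPDA; these inclusions were noted when the classes were introduced. Crucially, the notions of upstroke, downstroke, stroke, turn, and of the empty, same, and arbitrary stack models are all phrased purely in terms of the underlying DPDA, and are therefore inherited verbatim when a DVCA is viewed as a DVVPDA or a DVPDA.

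First I would observe that an instance of \textsc{$n$-Turn-Sync-DVCA-$X$}, for $X \in \{\text{Empty}, \text{Same}, \text{Arb}\}$, is already, without any modification, a valid instance of both \textsc{$n$-Turn-Sync-DVVPDA-$X$} and \textsc{$n$-Turn-Sync-DVPDA-$X$}. Since the $n$-turn synchronization condition and the stack constraint defining~$X$ depend only on the runs and the final stack contents, which are identical whether the machine is read as a DVCA, a DVVPDA, or a DVPDA, the set of $n$-turn synchronizing words in model~$X$ is exactly the same under all three readings. Hence the identity map is a trivial, polynomial-time reduction from the DVCA variant to each of the corresponding DVVPDA and DVPDA variants.

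Then I would simply invoke Theorem~\ref{thm:n-turn-DVCA-hard} for the same and arbitrary stack models, and Corollary~\ref{cor:n-turn-empty-hard} for the empty stack model, each of which yields \PSPACE-hardness of the respective DVCA variant for every fixed $n > 0$. Composing these with the identity reductions above establishes \PSPACE-hardness of all six problems \textsc{$n$-Turn-Sync-DVVPDA-$X$} and \textsc{$n$-Turn-Sync-DVPDA-$X$} for $X \in \{\text{Empty}, \text{Same}, \text{Arb}\}$, as claimed.

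There is essentially no hard step here: the only point deserving care is the verification that the turn count and the stack-model semantics are preserved under the class inclusions, and this is immediate from the definitions, since none of these notions refers to the size of the stack alphabet. All of the substantive combinatorial work has already been carried out in the construction behind Theorem~\ref{thm:n-turn-DVCA-hard}.
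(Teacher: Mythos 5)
Your proposal is correct and matches the paper's own (very terse) proof, which simply states that the claim follows from Theorem~\ref{thm:n-turn-DVCA-hard} and Corollary~\ref{cor:n-turn-empty-hard} by inclusion of automata classes. Your more explicit verification that the turn count and stack-model semantics are preserved under the inclusions $\mathrm{DVCA} \subseteq \mathrm{DVVPDA} \subseteq \mathrm{DVPDA}$ is exactly the content the paper leaves implicit.
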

\begin{theorem}
	\label{thm:0-turn-empty-P}
	\textsc{0-Turn-Sync-DVPDA-Empty} $\equiv$ \textsc{DFA-Sync}.
\end{theorem}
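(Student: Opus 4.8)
The plan is to prove the equivalence by exhibiting reductions in both directions, as was done for \textsc{Sync-DVPDA-Arb-NoReturn}. One direction is immediate: every DFA $A=(Q,\Sigma,\delta)$ is a DVPDA with $\Sigma_{\text{call}}=\Sigma_{\text{ret}}=\emptyset$ and $\Sigma_{\text{int}}=\Sigma$, whose stack never leaves $\bot$. Any synchronizing word for $A$ then induces, on every state, a run of constant stack height, i.e.\ a single stroke, ending with empty stack; so it is a $0$-turn synchronizing word in the empty stack model, and \textsc{DFA-Sync} reduces to \textsc{0-Turn-Sync-DVPDA-Empty} via the identity map. The interesting direction is the converse, and the key observation I would isolate first is that a $0$-turn word in the empty stack model can never change the stack at all.

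The structural fact I would establish is the following. Let $w$ be a $0$-turn synchronizing word for a DVPDA $M$ in the empty stack model. By definition each run $(q,\bot)\mto[w](\overline{q},\bot)$ consists of at most one stroke, and both of its endpoints have stack height $1$. If that single stroke is an upstroke, the height is non-decreasing; since it begins and ends at height $1$, it must stay constant, so no call letter can occur, as a call letter strictly increases the height and would leave it above $1$ forever, contradicting the empty final stack. If the single stroke is a downstroke, the height is non-increasing, and a call letter (a height increase) simply cannot appear at all. Hence in both cases $w$ uses no call letter, and on every run the stack content stays equal to $\bot$ throughout.

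Given this, I would define the DFA $A_M=(Q,\Sigma_{\text{int}}\cup\Sigma_{\text{ret}},\delta_A)$ by $\delta_A(q,\sigma)=\delta_{\text{i}}(q,\sigma)$ for $\sigma\in\Sigma_{\text{int}}$ and $\delta_A(q,\sigma)=\delta_{\text{r}}(q,\sigma,\bot)$ for $\sigma\in\Sigma_{\text{ret}}$; this is well defined because the partition is disjoint and, by the definition of $M$, a return letter applied when $\bot$ is on top of the stack only reads $\bot$ without popping it, leaving the stack unchanged. By the structural fact, every $0$-turn synchronizing word of $M$ in the empty stack model is a word over $\Sigma_{\text{int}}\cup\Sigma_{\text{ret}}$ on which $M$ and $A_M$ make identical state transitions, hence it synchronizes $A_M$. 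Conversely, any word over $\Sigma_{\text{int}}\cup\Sigma_{\text{ret}}$ keeps every stack frozen at $\bot$ and therefore induces only height-preserving (zero-turn) runs ending with empty stack; so a synchronizing word for $A_M$ is a $0$-turn synchronizing word for $M$ in the empty stack model. As $A_M$ is computable in polynomial time, this yields \textsc{0-Turn-Sync-DVPDA-Empty} $\le$ \textsc{DFA-Sync}.

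The main obstacle I anticipate is purely the careful handling of the stroke definition at the boundary condition: one must argue cleanly that pinning both endpoints of a single monotone height profile to height $1$ forces the profile to be flat, thereby excluding all call letters. Once the stack is known to be frozen at $\bot$, the translation to \textsc{DFA-Sync} and the verification of both reductions are routine.
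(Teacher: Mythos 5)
Your proposal is correct and follows essentially the same route as the paper: both argue that the $0$-turn condition together with the empty-stack requirement excludes all call letters (so the stack stays frozen at $\bot$), and then turn $M$ into a DFA by keeping only the internal transitions and the return transitions on $\bot$, reclassifying return letters as internal. Your write-up is merely more explicit about the stroke case analysis and about the trivial converse reduction, which the paper leaves implicit.
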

\begin{proof}
	The visibly condition and 
the fact that we can only synchronize with an empty stack means that we cannot read any letter from $\Sigma_{\text{call}}$, hence we cannot use the stack at all. Delete (a) all transitions with a symbol from $\Sigma_{\text{call}}$ and (b) all transitions with a symbol from~$\Sigma_{\text{ret}}$ and a non-empty stack. Then, assigning the elements in $\Sigma_{\text{ret}}$ to $\Sigma_{\text{int}}$ gives us a DFA.
\end{proof}
The next result is obtained by a reduction  from \textsc{DFA-Sync-From-Subset}.
\begin{theorem}
	\label{thm:0-turn-visibly-same}
	The problems \textsc{0-Turn-Sync-DVCA-\{Same, Arb\}}
	are \PSPACE-hard. 
\end{theorem}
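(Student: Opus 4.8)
The plan is to adapt the construction of Theorem~\ref{thm:DVPA-Arb-PSPACE} to a visibly \emph{counter} automaton, exploiting the observation that the synchronizing words produced there are in fact single upstrokes, so that the stack-symbol-based ``undo'' mechanism of the general arbitrary-stack proof becomes unnecessary once turns are forbidden. Given a DFA $A=(Q,\Sigma,\delta)$ with $S\subseteq Q$, I would build a DVCA $M$ whose state set is $Q$, with $\Sigma_{\text{call}}=\Sigma$, $\Sigma_{\text{int}}=\emptyset$, and a single return letter $\Sigma_{\text{ret}}=\{r\}$. Each call letter $\sigma\in\Sigma_{\text{call}}$ pushes the unique counter symbol $1$ and acts on the state as $\delta$, i.e.\ $\delta'_{\text{c}}(q,\sigma)=(\delta(q,\sigma),1)$. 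At the bottom of the stack, $r$ performs the subset selection: $\delta'_{\text{r}}(s,r,\bot)=s$ for $s\in S$ and $\delta'_{\text{r}}(q,r,\bot)=t$ for $q\in Q\setminus S$ and a fixed $t\in S$, so that $\delta'_{\text{r}}(Q,r,\bot)=S$. Above the bottom I would let $r$ pop the counter without changing the state; as argued below this transition can never occur inside a $0$-turn run, so its precise definition is irrelevant. Since $|\Gamma\setminus\{\bot\}|=1$, $M$ is a DVCA, and the construction is clearly polynomial.

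The key structural step, which I expect to require the most care, is to pin down the shape of any $0$-turn word directly from the visibly counter semantics and the definition of strokes. Because $M$ starts every run with the empty stack and the only stack-lowering transition is $r$ applied above $\bot$, a single stroke starting from $\bot$ cannot both push (via a call) and later pop (via $r$) without incurring a turn. Hence every $0$-turn word has the normal form $r^k c$ with $k\ge 0$ and $c\in\Sigma_{\text{call}}^*$: all occurrences of $r$ sit at the bottom, where they leave the stack height unchanged, and precede every call letter, so that the induced stroke is a single upstroke (or, if $c=\varepsilon$, a height-constant stroke).

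With the normal form in hand, both directions are short. For soundness, if $w$ synchronizes $S$ in $A$, then $rw$ first maps $Q$ onto $S$ and then drives $S$ to a single state while only pushing, so the whole run is an upstroke and $rw$ is a $0$-turn synchronizing word; moreover all runs finish with the identical stack $\bot 1^{|w|}$, so $rw$ works in the same stack model as well as in the arbitrary stack model. For completeness, I would take any $0$-turn synchronizing word and write it as $r^k c$. If $k\ge 1$, the $r$-prefix maps $Q$ exactly onto $S$ (the first $r$ selects, the remaining ones act as the identity on $S$), and then $c\in\Sigma_{\text{call}}^*$ must collapse $S$ to a singleton, i.e.\ $c$ synchronizes $S$ in $A$. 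If $k=0$, then $c$ synchronizes all of $Q\supseteq S$, hence in particular $S$. The degenerate pure-return words $r^k$ synchronize $M$ only when $|S|=1$, in which case $S$ is trivially synchronizable. In every case a $0$-turn synchronizing word for $M$ yields a synchronizing word for $S$ in $A$, which establishes the reduction and hence the \PSPACE-hardness of \textsc{0-Turn-Sync-DVCA-\{Same, Arb\}}.
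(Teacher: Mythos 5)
Your proposal is correct and follows essentially the same route as the paper's proof: the same reduction from \textsc{DFA-Sync-From-Subset}, the same DVCA construction (call letters simulate $\delta$ while pushing the counter symbol, a single return letter performs the subset selection at the stack bottom and is harmless elsewhere), and the same key observation that the $0$-turn condition forces every return letter to precede every call letter. Your normal-form argument $r^k c$ and the explicit treatment of the degenerate cases merely spell out in more detail what the paper's proof states compactly.
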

\begin{toappendix}
\begin{proof}[Proof of Theorem~\ref{thm:0-turn-visibly-same}]
	We give a reduction from  the \PSPACE-complete problem \textsc{DFA-Sync-From-Subset}. Let $A = (Q, \Sigma, \delta)$ be a DFA with $S \subseteq Q$. We construct from $A$ a DVCA $M = (Q, \Sigma_{\text{call}} \cup \Sigma_{\text{int}} \cup \Sigma_{\text{ret}}, \{1, \bot\}, \delta'=\delta'_{\text{c}} \cup \delta'_{\text{i}} \cup \delta'_{\text{r}}, \bot)$. We set $\Sigma_{\text{call}} = \Sigma$, $\Sigma_{\text{int}} = \emptyset$, and $\Sigma_{\text{ret}} = \{b\}$. For all $q \in Q$ and $\sigma \in \Sigma_{\text{call}}$, we set $\delta'_\text{c}(q, \sigma) = (\delta(q, \sigma), 1)$. For all states $q \in Q\backslash S$, we set $\delta'_\text{r}(q, b, \bot) = s$ for some arbitrary but fixed state $s \in S$. All other transitions act as the identity.
	
	Note that the 0-turn condition only allows us to read the letter $b$ before any letter in $\Sigma_{\text{call}}$ has been read, as afterwards $b$ would decrease the stack after it has been increased. Therefore, every synchronizing word for $M$ in the same and arbitrary stack models also synchronizes~$S$ in~$Q$ by either synchronizing the whole set~$Q$ without using any $b$ transition, or it brings~$Q$ in exactly the set~$S$ with the first letter~$b$ and continues to synchronize~$S$.
\end{proof}
\end{toappendix}
\begin{corollary}
	The problems \textsc{0-Turn-Sync-DVVPDA-\{Same, Arb\}}, 
	and \textsc{0-Turn-Sync-DVPDA-\{Same, Arb\}} are \PSPACE-hard.
\end{corollary}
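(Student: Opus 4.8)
The plan is to obtain this result as an immediate consequence of Theorem~\ref{thm:0-turn-visibly-same} by exploiting the inclusions among the automata classes. Recall from the definitions in Section~\ref{sec:Notation} that every DVCA is also a DVVPDA (the restriction $|\Gamma\backslash\{\bot\}| = 1$ forces a single non-bottom stack symbol, which is a special case of the very visibly condition), and that every DVVPDA is in particular a DVPDA. Hence any instance of \textsc{0-Turn-Sync-DVCA-\{Same, Arb\}} is simultaneously an instance of the corresponding DVVPDA and DVPDA problems, and the notion of a $0$-turn synchronizing word in the same, respectively arbitrary, stack model is literally identical in all three settings.

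Concretely, I would take the reduction from \textsc{DFA-Sync-From-Subset} constructed in the proof of Theorem~\ref{thm:0-turn-visibly-same} and simply reinterpret its output DVCA $M$ as a DVVPDA, respectively as a DVPDA. Since $M$ is built with stack alphabet $\{1, \bot\}$, it trivially satisfies the very visibly condition, so no modification of the construction is required. The equivalence established there---namely that $(A, S)$ is a yes-instance of \textsc{DFA-Sync-From-Subset} if and only if $M$ admits a $0$-turn synchronizing word in the same (arbitrary) stack model---refers only to the runs of $M$ and its stack-height behaviour, both of which are unaffected by whether we regard $M$ as a DVCA, a DVVPDA, or a DVPDA. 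Therefore the very same polynomial-time many-one reduction simultaneously witnesses \PSPACE-hardness of all four problems in the statement.

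There is no genuine obstacle here: the only point to verify is that the class inclusions run in the direction needed for a hardness transfer, that is, hardness for the more restricted class implies hardness for the more general class. This is precisely the direction supplied by Theorem~\ref{thm:0-turn-visibly-same}, since DVCA is the most restrictive of the three models, so the lower bound propagates upward to DVVPDA and DVPDA without any further argument.
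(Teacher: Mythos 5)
Your proposal is correct and matches the paper's own argument exactly: the paper derives this corollary from Theorem~\ref{thm:0-turn-visibly-same} by inclusion of automata classes (every DVCA is a DVVPDA, and every DVVPDA is a DVPDA), which is precisely the hardness-transfer-upward reasoning you spell out. Your version merely makes explicit the details (the reduction's output being reinterpretable without modification, the identical notion of $0$-turn synchronizing word) that the paper leaves implicit.
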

\section{(Non-)Tight Upper Bounds}
\label{sec:upperBound}
In this section we will prove that at least all considered problems are decidable (in contrast to non-visibly DPDAs and DCAs, see~\cite{dpda-crossref}) by giving exponential time upper bounds. We will also give some tight \PSPACE\ upper bounds for some \PSPACE-hard problems discussed in previous section, but for other problems previously discussed a gap between \PSPACE-hardness and membership in \EXP\ remains.
\begin{theorem}
	\label{thm:visibly-Exptime}
	All problems listed in Table~\ref{tab:VPA-results} are in \EXP.
\end{theorem}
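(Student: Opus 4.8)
The plan is to decide every problem in the table by a single uniform \emph{power-automaton} construction, reducing synchronization to a reachability (emptiness) question on an exponentially larger visibly push-down automaton. Given a DVPDA $M=(Q,\Sigma,\Gamma,\delta,\bot)$ I would build an automaton $M'$ that simulates, on one input word, the $|Q|$ runs of $M$ starting in the $|Q|$ different states simultaneously. The decisive point is the visibly property: since the stack-height profile is fixed by the input word, all $|Q|$ runs push and pop in lockstep, so at every moment their stacks have equal height. Hence I can represent the joint stack as a single stack over the alphabet $\Gamma^{Q}$, each cell being the tuple of symbols the individual runs carry at that height, and the joint control state as a map $f\colon Q\to Q$ recording the current state of each run. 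On a call letter $M'$ pushes one tuple built from the $\delta_{\text{c}}$-images, on an internal letter it only updates $f$, and on a return letter it pops one tuple and lets each component read its own symbol via $\delta_{\text{r}}$. Thus $M'$ is again a DVPDA, with the same visibly partition of $\Sigma$, but with $|Q|^{|Q|}$ states and $|\Gamma|^{|Q|}$ stack symbols, i.e.\ of size exponential in $|M|$.

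Under this encoding, $M$ is synchronizable exactly when, starting from the initial configuration $(\mathrm{id}_Q,\bot)$ of $M'$, one can reach a configuration whose control state is a \emph{constant} map $f\equiv c$ together with the stack condition of the respective model. For the empty-stack model the target stack is just the bottom cell; for the same-stack model it is any sequence of \emph{diagonal} tuples (all components equal); for the arbitrary-stack model there is no stack constraint. In each case the target is a regular set of configurations of $M'$. I would then invoke the fact that the set of configurations reachable from a fixed configuration of a push-down system is effectively regular and computable in time polynomial in the size of the system (saturation / $\mathrm{post}^{*}$), or equivalently that emptiness and intersection-with-regular for DVPDAs are in \PTIME~\cite{DBLP:conf/stoc/AlurM04}; intersecting the reachable set with the (regular) target and testing non-emptiness therefore runs in time polynomial in $|M'|$, hence exponential in $|M|$. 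This settles \textsc{Sync-DVPDA-Empty}, \textsc{Sync-DVPDA-Same}, \textsc{Sync-DVPDA-Arb} and all their subclass variants in \EXP.

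For the finite-turn variants I would enrich the finite control of $M'$ with a stroke phase (currently ascending or descending) and a turn counter bounded by $n$. Because the height profile is common to all runs, a turn is a global event and the counter is well defined and consistent: crossing from ascending to descending or back increments it, and transitions that would exceed $n$ turns are deleted. This multiplies the state count by $O(n)$, leaving $M'$ exponential, and the $n$-turn target condition is again captured by the same reachability test. I expect the main obstacle to be the same-stack model, where the synchronizing word must leave \emph{all} of the stack---not merely its top---diagonal, so the target cannot be recognized by inspecting the top symbol alone. This is exactly why I route the argument through the regularity of $\mathrm{post}^{*}$ (the entire reachable stack language is regular), rather than through a naive product with a finite acceptor, and it is the step that must be set up carefully to keep everything within \EXP.
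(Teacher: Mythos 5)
Your proposal is correct, and its core is the same as the paper's: both build the $|Q|$-fold product DVPDA (states $Q^{|Q|}$, stack alphabet $\Gamma^{|Q|}$), justified by the visibly condition forcing all runs to push and pop in lockstep, and both handle the finite-turn variants by adding a bounded stroke/turn counter to the finite control. Where you genuinely diverge is in how the stack-side synchronization condition is verified. The paper stays entirely within the DVPDA-emptiness framework: it augments the product automaton with a \emph{fresh return letter} $r$ and states $q_{\text{check}}, q_{\text{fin}}, q_{\text{fail}}$, so that after reaching a diagonal control state the word can be prolonged by $r$'s that drain the stack and check, tuple by tuple, that every cell is diagonal (same-stack model) until the bottom symbol is reached; the accepted language is then exactly the set of synchronizing words prolonged by $r$'s, and \PTIME\ emptiness for DVPDAs~\cite{DBLP:conf/stoc/AlurM04} finishes the argument. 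You instead keep the product automaton untouched and invoke saturation-based reachability for pushdown systems: $\mathrm{post}^*$ of a configuration is effectively regular and polynomial-time computable, and each of the three stack conditions (bottom-only, all-diagonal, unconstrained) is a regular set of configurations. Both arguments are sound and give the same \EXP\ bound; you correctly identify that the same-stack model is the delicate case, since the condition concerns the whole stack rather than its top. The trade-off is that the paper's route needs only the (cited) emptiness result for DVPDAs at the price of an ad-hoc automaton modification, whereas your route is cleaner and more uniform across the three models but imports a stronger tool ($\mathrm{post}^*$ regularity, e.g.\ Bouajjani--Esparza--Maler) that the paper never needs to invoke.
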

\begin{proof}
	We show the claim explicitly for \textsc{Sync-DVPDA-Same}, \textsc{Sync-DVPDA-Arb}, \textsc{$n$-Turn-Sync-DVPDA-Empty}, \textsc{$n$-Turn-Sync-DVPDA-Same}, and \textsc{$n$-Turn-Sync-DVPDA-Arb}. The other results follow by inclusion of automata classes.
	
	Let $M = (Q, \Sigma_{\text{call}} \cup \Sigma_{\text{int}} \cup \Sigma_{\text{ret}}, \Gamma, \delta, \bot)$ be a DVPDA.
	We construct from $M$ the $|Q|$-fold product DVPDA $M^{|Q|}$ with state set $Q^{|Q|}$, consisting of $|Q|$-tuples of states, and alphabet $\Sigma_{\text{call}} \cup \Sigma_{\text{int}} \cup \Sigma_{\text{ret}}$. Since $M$ is a DVPDA, for every word $w\in (\Sigma_{\text{call}} \cup \Sigma_{\text{int}} \cup \Sigma_{\text{ret}})^*$, the stack heights on runs starting in different states in $Q$ is equal at every position in $w$. Hence, we can  multiply the stacks to obtain the stack alphabet $\Gamma^{|Q|}$ for $M^{|Q|}$. For the transition function $\delta^{|Q|}$ (split up into $\delta^{|Q|}_{\text{c}} \cup \delta^{|Q|}_{\text{i}}\cup \delta^{|Q|}_{\text{r}}$) of $M^{|Q|}$ we simulate $\delta$ independently on every state in an $|Q|$-tuple, i.e., for $(q_1, q_2, \dots, q_n) \in Q^{|Q|}$ and letters $\sigma_c \in \Sigma_{\text{call}}, \sigma_i \in \Sigma_{\text{int}}, \sigma_r \in \Sigma_{\text{ret}}$, we set
	\begin{itemize}
	\item 
	$\delta^{|Q|}_{\text{c}}((q_1, q_2, \dots, q_n), \sigma_c) = ((q_1', q_2', \dots, q_n'), (\gamma_1, \gamma_2, \dots, \gamma_n))$ if $\delta(q_j, \sigma_c) = (q_j', \gamma_j)$ for $j\in$$[n]$;
	\item
	$\delta^{|Q|}_{\text{i}}((q_1, q_2, \dots, q_n), \sigma_i) = (\delta(q_1, \sigma_i), \delta(q_2, \sigma_i), \dots, \delta(q_n, \sigma_i))$;
	\item
	$\delta^{|Q|}_{\text{r}}((q_1, q_2, \dots, q_n), \sigma_r, (\gamma_1, \gamma_2, \dots, \gamma_n)) = (\delta(q_1, \sigma_r, \gamma_1), \delta(q_2, \sigma_r, \gamma_2), \dots, \delta(q_n, \sigma_r, \gamma_n))$.
	\end{itemize}
	The bottom symbol of the stack is the $|Q|$-tuple $(\bot, \bot, \dots, \bot)$.
	Let $p_1, p_2, \dots, p_n$ be an enumeration of the states in $Q$ and set $(p_1, p_2, \dots, p_n)$ as the start state of $M^{|Q|}$. 

For \textsc{Sync-DVPDA-Arb}, 
set $\{(q, q, \dots, q) \in Q^{|Q|}\mid q\in Q\}$ as the final states for~$M^{|Q|}$.
	Clearly, for \textsc{Sync-DVPDA-Arb}, $M^{|Q|}$ is a DVPDA and the words accepted by $M^{|Q|}$ are precisely the synchronizing words for $M$ in the arbitrary stack model. As the emptiness problem can be decided for visibly push-down automata in time polynomial in the size of the automaton~\cite{DBLP:conf/stoc/AlurM04}, the claim follows,  observing that  $M^{|Q|}$ is exponentially larger than $M$.
	
	For \textsc{Sync-DVPDA-Same}, we produce a DVPDA $M^{|Q|}_\text{same}$ by enhancing the automaton $M^{|Q|}$ with three additional states $q_{\text{check}}$, $q_{\text{fin}}$, and $q_{\text{fail}}$ and an additional new return letter $r$ and set $q_{\text{fin}}$ as the single accepting state of $M^{|Q|}_\text{same}$, while the start state coincides with the one of $M^{|Q|}$.
	For states $(q_1, q_2, \dots, q_n) \in Q^{|Q|}$ we set $\delta^{|Q|}_{\text{r}}((q_1, q_2, \dots, q_n), r, (\gamma_1, \gamma_2, \dots, \gamma_n)) = q_{\text{check}}$ if $q_i = q_j$ and $\gamma_i = \gamma_j$, $\gamma_i \neq \bot$ for all $i, j \in [n]$. We set $\delta^{|Q|}_{\text{r}}((q_1, q_2, \dots, q_n), r, (\bot, \bot, \dots, \bot)) = q_{\text{fin}}$ if $q_i = q_j$ for all $i, j \in [n]$.
	For all other cases, we map with $r$ to $q_{\text{fail}}$. We let the transitions for $q_{\text{fail}}$ be defined such that $q_{\text{fail}}$ is a non-accepting trap state for all alphabet symbols. 
	For $q_{\text{check}}$ we set $\delta^{|Q|}_{\text{r}}(q_{\text{check}}, r, (\gamma_1, \gamma_2, \dots, \gamma_n)) = q_{\text{check}}$ if $\gamma_i = \gamma_j$ for $i,j \in [n]$. Further, we set $\delta^{|Q|}_{\text{r}}(q_{\text{check}}, r, (\bot, \bot, \dots, \bot)) = q_{\text{fin}}$ and map with $r$ to $q_{\text{fail}}$ in all other cases. The state $q_{\text{check}}$ also maps to $q_{\text{fail}}$ with all input symbols other than $r$.
	We let the transitions for $q_{\text{fin}}$ be defined such that $q_{\text{fin}}$ is an accepting trap state for all alphabet symbols.
	
	Clearly, for \textsc{Sync-DVPDA-Same} $M^{|Q|}_\text{same}$ is a DVPDA and the words accepted by $M^{|Q|}_\text{same}$ are precisely the synchronizing words for $M$ in the same stack model, potentially prolonged by a sequence of $r$'s, as the single accepting state $q_{\text{fin}}$ can only be reached from a state in $Q^{|Q|}$ where the states are synchronized and the stack content is identical for each run (which is checked in the state $q_{\text{check}}$). As the size of $M^{|Q|}_\text{same}$ is exponential in the size of $M$,
we get the claimed result as in the previous case.

For the \textsc{$n$-Turn} synchronization problems, we have to modify the previous construction by adding a stroke counter similar as in the proof of Theorem~\ref{thm:n-turn-DVCA-hard} (see Appendix~\ref{apx:n-turn-exptime}).
\begin{toappendix}
\begin{proof}[Further proof details for the \textsc{$n$-Turn} cases in Theorem~\ref{thm:visibly-Exptime}]
	\label{apx:n-turn-exptime}
	For the problems \textsc{$n$-Turn-\linebreak[4]Sync-DVPDA} in the empty, same, and arbitrary stack models, we enhance in $M^{|Q|}$ each $|Q|$-tuple with an additional index $I \in \{0,1,\dots, n+1\}$, i.e., the basic set of states is now $Q^{|Q|} \times \{0,1,\dots, n+1\}$. We further add for all three models the non-accepting trap state $q_{\text{fail}}$ to the set of states. For each $(|Q|+1)$-tuple, we implement the transition function $\delta^{|Q|}_{\text{i}}$ of $M^{|Q|}$ for internal letters in $\Sigma_{\text{int}}$ as before by keeping the value of the index $I$ in each transition.
	For call letters in $\Sigma_{\text{call}}$ we realize $\delta^{|Q|}_{\text{c}}$ as before for state-tuples with index $I < n+1$ by simulation $\delta$ on the individual states and setting in every image $I = I+1$ if $I$ is even, and keeping the value of $I$ if $I$ is odd. For tuples with index $I = n+1$, we proceed as before for smaller index if $n+1$ is odd, while for even $n+1$ we map with a call letter to the state $q_{\text{fail}}$. 
	For the return letters in $\Sigma_{\text{ret}}$, we realize $\delta^{|Q|}_{\text{r}}$ for pairs of states in $Q^{|Q|}\times \{1, 2, \dots, n+1\}$ and bottom of stack symbol $(\bot, \bot, \dots, \bot)$ as before by simulating $\delta$ on the individual states and keeping the value of $I$. 
	For all other stack symbols, we realize $\delta^{|Q|}_{\text{r}}$ as before for state-tuples with index $0 < I < n+1$ by simulation $\delta$ on the individual states and keeping the value of $I$ if	$I$ is even, and setting in every image $I = I+1$ if  $I$ is odd.
	For tuples with $I=n+1$ we proceed as before if $n+1$ is even.
	For states with index $I = 0$ or $I = n+1$ for odd $n+1$, we map with each return letter to $q_{\text{fail}}$ for stack symbols other than the bottom of stack symbol. 
	In all three models we set $(p_1, p_2, \dots, p_n, 0)$ as the start state, with $p_1, p_2, \dots, p_n$ being an enumeration of the states in~$Q$.
	
	For \textsc{$n$-Turn-Sync-DVPDA-Arb}, we set $\{(q, q, \dots, q, I) \mid q\in Q, I\in \{0,1,\dots,n+1\}\}\subset Q^{|Q|}\times \{0,1,\dots,n+1\}$ as the set of final states.
	
	For \textsc{$n$-Turn-Sync-DVPDA-Empty}, we set the additional trap state $q_{\text{fin}}$ as the single accepting state and add a new return letter $r$ with which we map to $q_{\text{fin}}$ for states $(q, q, \dots, q, I) \in Q^{|Q|}\times \{0,1,\dots,n+1\}$ with the bottom-of-stack symbol and to $q_{\text{fail}}$ for all other stack symbols or states.
	
	For \textsc{$n$-Turn-Sync-DVPDA-Same}, we add the two states $q_{\text{check}}$ and $q_{\text{fin}}$ to $M^{|Q|}$ and set $q_{\text{fin}}$ as the single accepting state. Again, we add a new return letter $r$.
	For states $(q, q, \dots, q, I)$ with $I \in \{0,1,\dots,n+1\}$ and symbol $(\gamma_1, \gamma_2, \dots, \gamma_n)$ on top of the stack, which is not the bottom-of-stack symbol, we map with $r$ to $q_{\text{check}}$ if all entries $\gamma_i$ in the stack symbol tuple are identical. If instead the bottom-of-stack symbol is on top of the stack, we map with $r$ for states $(q, q, \dots, q, I)$ directly to $q_{\text{fin}}$. For all other states and stack symbols, $r$ maps to $q_{\text{fail}}$.
	For $q_{\text{check}}$, we stay in $q_{\text{check}}$ with the letter $r$ if we see a symbol $(\gamma, \gamma, \dots, \gamma)$ on the stack with $\gamma \in \Gamma\backslash\{\bot\}$ and map with $r$ to $q_{\text{fin}}$ if we see the bottom-of-stack symbol. For all other stack symbols, $r$ maps $q_{\text{check}}$ to $q_{\text{fail}}$. Also, all input letter other than $r$ maps $q_{\text{check}}$ to $q_{\text{fail}}$. For $q_{\text{fin}}$ we define all transitions such that $q_{\text{fin}}$ is a trap state.
	
	In all three cases, the constructed automaton 
is a DVPDA that accepts precisely the $n$-turn synchronizing words for $M$ (potentially prolonged by a sequence of $r$'s) in the respective stack model. As the  constructed automaton 
is of size $\mathcal{O}((|Q||\Gamma|)^{|Q|})$ in all three cases, we can decide whether the constructed automaton 
 accepts at least one word in time exponential in the description of $M$.
\end{proof}
\end{toappendix}
\end{proof}

\begin{remark}
	It cannot be expected to show \PSPACE-membership of synchronization problems concerning DVPDAs using a $|Q|$-fold product DVPDA, as the resulting automata is exponentially large in the size of the DVPDA that is to be synchronized, as the emptiness problem for DVPDAs is \PTIME-complete~\cite{DBLP:journals/sigact/OkhotinS14}. Rather, one would need a separate membership proof. 
	We conjecture that a \PSPACE-membership proof similar to the one for the short synchronizing word problem presented in~\cite{DBLP:journals/jalc/ChistikovMS19} can be obtained if exponential upper bounds for the length of synchronizing words for DVPDAs in the respective models can be obtained.
\end{remark}
\begin{theorem}
	\label{thm:DVPDA-0-same-PSPACE}
	The problems
	\textsc{0-Turn-Sync-\{DVPDA, DVVPDA, DVCA\}-Same}
	are in \PSPACE.
\end{theorem}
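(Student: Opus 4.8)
The plan is to give a nondeterministic algorithm that guesses a $0$-turn synchronizing word letter by letter while storing only a subset of $Q$, so that the whole procedure runs in polynomial space; membership in \PSPACE\ then follows from Savitch's theorem ($\NPSPACE=\PSPACE$). The main obstacle is that a synchronizing word may be exponentially long and therefore build an exponentially tall stack, which we cannot afford to store explicitly. I resolve this by exploiting the $0$-turn restriction structurally.

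First I would observe that a $0$-turn run (a single stroke) starting from the empty stack has \emph{non-decreasing} stack height: every return letter must be read while $\bot$ is on top, since otherwise it would pop and create a turn. Consequently every candidate word factorizes as $w=uv$, where the flat prefix $u\in(\Sigma_{\text{int}}\cup\Sigma_{\text{ret}})^*$ keeps the stack equal to $\bot$ on all runs, and the push suffix $v$ (empty, or beginning with a call letter) uses only letters from $\Sigma_{\text{call}}\cup\Sigma_{\text{int}}$, because after the first push no return may occur without making a turn. Two facts make this tractable. During $v$ the state component of every transition is independent of the stack, since call- and internal-transitions read no stack symbol, so the state dynamics are purely those of a DFA. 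And since $v$ performs no pops, the final stack of a run is exactly the concatenation of the symbols that run pushed at the successive call letters.

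These two facts let me replace the stack by a local check. Starting from the set of states reachable from $Q$ by reading $u$, I track the set $T\subseteq Q$ of currently active states under a power-set simulation: an internal letter $\sigma$ acts by $T\mapsto\delta_{\text{i}}(T,\sigma)$, and a call letter $\sigma$ is admissible only if there is a \emph{single} stack symbol $\gamma$ such that $\delta_{\text{c}}(q,\sigma)\in Q\times\{\gamma\}$ holds for every $q\in T$, in which case $T$ is updated by the corresponding state component. This single-symbol condition, checked at every call letter, is exactly equivalent to all runs having pushed identical sequences, and hence to identical final stacks in the same stack model; the flat phase needs no such check because there all stacks equal $\bot$. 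A single bit distinguishes the flat phase from the push phase, which lets me forbid return letters after the first push and thereby enforce the $0$-turn condition. The algorithm accepts as soon as $T$ becomes a singleton, since then all runs occupy one state with identical stacks and the word may stop.

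The algorithm stores only $T$ (one bit per state), the phase bit, and a constant amount of bookkeeping, so it uses polynomial space; guessing the next letter nondeterministically and halting when $T$ is a singleton gives an \NPSPACE\ procedure, whence the problem is in \PSPACE. For \textsc{DVVPDA} and \textsc{DVCA} the argument specializes immediately: the very visibly condition forces $\delta_{\text{c}}(q,\sigma)$ to push a symbol depending only on $\sigma$, so the call-admissibility check is automatically satisfied and the identical algorithm applies. I expect the main difficulty to lie precisely in justifying that the per-call single-symbol condition faithfully captures the same stack requirement without ever materializing the possibly exponential stack, which is exactly where the absence of pops in a $0$-turn upstroke is essential.
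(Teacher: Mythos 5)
Your proposal is correct and follows essentially the same route as the paper's proof: both replace the stack by a single ``still-flat'' bit plus the observation that in the 0-turn same-stack setting every call must push the same symbol across all active runs and returns may only be read on $\bot$, and both conclude via reachability in the resulting implicit exponential-size automaton using $\NPSPACE=\PSPACE$. The only cosmetic difference is that you track the subset of active states while the paper tracks a $|Q|$-fold product tuple, and the paper itself notes in a footnote that the powerset variant works equally well.
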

\begin{proofsketch}
	Let $M = (Q, \Sigma_{\text{call}} \cup \Sigma_{\text{int}} \cup \Sigma_{\text{ret}}, \Gamma, \delta, \bot)$ be a DVPDA.
	For the same stack model, the 0-turn condition forbids us to put in simultaneous runs different letters on the stack at any time while reading a synchronizing word, as we cannot exchange symbols on the stack with visible PDAs. Note that this is a dynamic runtime-behavior and does not imply that $M$ is necessarily very visibly. 
	Further, the 0-turn and visibility condition enforces that at each step the next transition does not depend on the stack content if the symbol on top of the stack is not~$\bot$.
	Hence, we can construct from $M$ a $|Q|$-fold DFA (with a state set exponential in the size of $|Q|$) in a similar way as in the proof of Theorem~\ref{thm:visibly-Exptime} by neglecting the stack as nothing is ever popped from the stack. Details on the construction can be found in the appendix. As the emptiness problem for DFAs can be solved in \NLOGSPACE, the claim follows with Savitch's famous theorem stating that $\NPSPACE = \PSPACE$~\cite{DBLP:journals/jcss/Savitch70}.\footnote{Here, a smaller powerset-construction would also work but for simplicity, we stuck with the introduced $|Q|$-fold product construction.}
\end{proofsketch}
\begin{toappendix}
\begin{proof}[Proof of Theorem~\ref{thm:DVPDA-0-same-PSPACE}]
	Let $M = (Q, \Sigma_{\text{call}} \cup \Sigma_{\text{int}} \cup \Sigma_{\text{ret}}, \Gamma, \delta=\delta_{\text{c}} \cup \delta_{\text{i}} \cup \delta_{\text{r}}, \bot)$ be a DVPDA.
	For the same stack model, the 0-turn condition forbids us to put in simultaneous runs different letters on the stack at any time while reading a synchronizing word, as we cannot exchange symbols on the stack with visible PDAs. Note that this is a dynamic runtime-behavior and does not imply that $M$ is necessarily very visibly. 
	Further, the 0-turn and visibility condition enforces that at each step the next transition does not depend on the stack content if the symbol on top of the stack is not~$\bot$.
	We construct from $M$ a partial $|Q|$-fold product DFA $M^{|Q|}$ with state set $Q^{|Q|}\times \{0,1\}$, consisting of $|Q|$-tuples of states with an additional bit of information which will indicate whether the stack is still empty, and alphabet $\Sigma_{\text{call}} \cup \Sigma_{\text{int}} \cup \Sigma_{\text{ret}}$. For the transition function $\delta^{|Q|}$ of $M^{|Q|}$, we simulate for a state $(q_1, q_2, \dots, q_n, b)$ with $q_1, q_2, \dots, q_n \in Q$, $b \in \{0,1\}$ and letter $\sigma$, $\delta$ (by restricting the image to the first component in $Q$ for call letters) on the individual states $q_i$, $q_j$ with $i,j \in [n]$ in the tuple if (1) $\sigma \in \Sigma_{\text{ret}}$ and $b=0$,
	(2) $\sigma \in \Sigma_{\text{int}}$, or
	(3) $\sigma \in \Sigma_{\text{call}}$ and for $\delta_\text{c}(q_i, \sigma) = (q_i', \gamma_i)$, $\delta_\text{c}(q_j, \sigma) = (q_j', \gamma_j)$ it holds that $\gamma_i = \gamma_j$. In case (1) and (2), we keep the value of~$b$ in the transition and in case (3), we ensure $b=1$ in the image of the transition.
	The size of the state graph of $M^{|Q|}$ is bounded by $\mathcal{O}(|Q|^{|Q|})$. Clearly, the DVPDA $M$ can be synchronized by a 0-turn synchronizing word in the same stack model if and only if there is a path in the state graph of $M^{|Q|}$ from the state $(q_1, q_2, \dots, q_n, 0)$ for $Q = \{q_1, q_2, \dots, q_n\}$ to some state in $\{(q_i, q_i, \dots, q_i, b) \mid i\in [n], b \in \{0,1\}\}$. These $2|Q|$ reachability tests can be performed in $\NPSPACE = \PSPACE$~\cite{DBLP:journals/jcss/Savitch70}. 
	The claim for the other problems follows by inclusion of automata classes.
\end{proof}
\end{toappendix}
\begin{corollary}\label{DVPDA-Same-NoReturn-PSPACE}
	\textsc{Sync-DVPDA-Same-NoReturn} is in \PSPACE.
\end{corollary}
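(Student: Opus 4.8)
\textbf{The plan} is to observe that the \textsc{-NoReturn} restriction degenerates into the $0$-turn setting, so that the result reduces immediately to Theorem~\ref{thm:DVPDA-0-same-PSPACE}. The key structural fact is that an input DVPDA $M$ with $\Sigma_{\text{ret}}=\emptyset$ has no pop transition at all: every letter is either a call letter (which strictly increases the stack height) or an internal letter (which leaves it unchanged). Hence, along the run $(q,\bot)\mto[w](\overline{q},\upsilon)$ induced by any word $w$ from any state $q$, the stack height is non-decreasing. This means the entire sequence of configurations is a single upstroke, i.e.\ it consists of exactly one stroke and therefore involves $0$ turns. This is precisely the observation already used to derive Corollary~\ref{cor:DVPDA-Same-NoReturn-PSPACE} from the proof of Theorem~\ref{Sync-DVPDA-Same-PSPACE}.

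The consequence I would then draw is that, on inputs with $\Sigma_{\text{ret}}=\emptyset$, the problems \textsc{Sync-DVPDA-Same-NoReturn} and \textsc{0-Turn-Sync-DVPDA-Same} coincide as decision problems. Indeed, any word $w$ that synchronizes $M$ in the same stack model automatically satisfies the $0$-turn condition, since no run can ever make a turn; and conversely every $0$-turn synchronizing word in the same stack model is in particular a synchronizing word in the same stack model. Thus the instance $M$ of \textsc{Sync-DVPDA-Same-NoReturn} is simultaneously an instance of \textsc{0-Turn-Sync-DVPDA-Same} with the same set of witnessing words, and the identity map is a (trivially polynomial-time) reduction from \textsc{Sync-DVPDA-Same-NoReturn} to \textsc{0-Turn-Sync-DVPDA-Same}.

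Finally I would invoke Theorem~\ref{thm:DVPDA-0-same-PSPACE}, which places \textsc{0-Turn-Sync-DVPDA-Same} in \PSPACE, to conclude that \textsc{Sync-DVPDA-Same-NoReturn} is in \PSPACE\ as well. There is essentially no technical obstacle here; the only point requiring a moment's care is to confirm that the two problems really do have identical witnesses on \textsc{-NoReturn} instances, so that the \PSPACE\ algorithm underlying Theorem~\ref{thm:DVPDA-0-same-PSPACE}---which tracks, via the $|Q|$-fold product with an ``empty stack'' bit, whether all simultaneous runs agree on the symbol pushed at each call step and whether the states coincide---correctly decides same-stack synchronizability when the turn bound is vacuously met. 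Combined with the matching lower bound of Corollary~\ref{cor:DVPDA-Same-NoReturn-PSPACE}, this also yields the \PSPACE-completeness recorded in Table~\ref{tab:VPA-results}.
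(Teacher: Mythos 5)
Your proposal is correct and matches the paper's own argument: the paper likewise observes that with $\Sigma_{\text{ret}}=\emptyset$ every synchronizing word is vacuously a $0$-turn synchronizing word, and then concludes via Theorem~\ref{thm:DVPDA-0-same-PSPACE}. The extra care you take in spelling out that the two problems have identical witness sets is fine but not needed beyond the one-line observation.
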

\begin{toappendix}
\begin{proof}[Proof of Cor.~\ref{DVPDA-Same-NoReturn-PSPACE}]
	Let $M = (Q, \Sigma_{\text{call}} \cup \Sigma_{\text{int}} \cup \Sigma_{\text{ret}}, \Gamma, \delta, \bot)$ be a DVPDA with $\Sigma_{\text{ret}} = \emptyset$. As we have no return letter, any synchronizing word for $M$ is also a 0-turn synchronizing word and hence, the claim follows with Theorem~\ref{thm:DVPDA-0-same-PSPACE}.
\end{proof}
\end{toappendix}
\begin{theorem}
	The problems
	\textsc{0-Turn-Sync-\{DVPDA, DVVPDA, DVCA\}-Arb}, and
	\textsc{1-Turn-Sync-DVCA-\{Empty, Same, Arb\}}
	are in \PSPACE.
\end{theorem}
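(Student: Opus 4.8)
The plan is to treat the two families separately; both reduce to reachability in an exponentially large but succinctly represented configuration graph, decided in $\PSPACE$ via Savitch's theorem~\cite{DBLP:journals/jcss/Savitch70}.

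For \textsc{0-Turn-Sync-\{DVPDA, DVVPDA, DVCA\}-Arb} I would reuse the $|Q|$-fold product DFA construction from the proof of Theorem~\ref{thm:DVPDA-0-same-PSPACE}, with one simplification. There the same stack model forced, in case~(3), the pushed symbols to coincide ($\gamma_i=\gamma_j$); in the arbitrary stack model this condition is dropped, so a call letter is always simulated on the individual state components (projecting the image to its state coordinate) while the empty-stack bit $b$ is set to $1$, internal letters always apply and keep $b$, and return letters apply only when $b=0$ (reading the bottom symbol) and keep $b$. This exactly enforces the $0$-turn shape: before the first call every return reads $\bot$ and leaves $b=0$, and after the first call ($b=1$) every return would pop, hence decrease the stack and create a turn, so it is forbidden. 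As in Theorem~\ref{thm:DVPDA-0-same-PSPACE} the product has $\mathcal{O}(|Q|^{|Q|})$ states, each described in polynomial space, and $M$ has a $0$-turn synchronizing word in the arbitrary stack model iff a synchronized tuple $(q_i,\dots,q_i,b)$ is reachable from $(q_1,\dots,q_n,0)$; this is an $\NPSPACE=\PSPACE$ reachability test, and the DVVPDA and DVCA cases follow by inclusion of automata classes.

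For \textsc{1-Turn-Sync-DVCA-\{Empty, Same, Arb\}} I would form the $|Q|$-fold product DVCA $M^{|Q|}$ as in Theorem~\ref{thm:visibly-Exptime}; since a DVCA has stack alphabet $\{1,\bot\}$, this is again a one-counter visibly automaton whose configurations $(\bar q,c)$ consist of a tuple $\bar q\in Q^{|Q|}$ (writable in $\mathcal{O}(|Q|\log|Q|)$ bits) and the common counter value $c$. A $1$-turn synchronizing word is a run of $M^{|Q|}$ whose counter profile is non-decreasing up to a single peak and then non-increasing, ending in a synchronized tuple; tracking one phase bit lets me forbid counter-decrementing returns before the turn and calls after it. The three stack models differ only in the accepting counter value: for the arbitrary model any $c$ is accepting, for the empty model I require $c=0$, and the same model coincides with the arbitrary one because in a DVCA all runs share the identical counter at every step (cf.\ Theorem~\ref{Sync-equivalences}).

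The only obstacle to a direct $\NPSPACE$ search is that the peak value $h$ may a priori be unbounded, making the configuration graph infinite; the key step is a pumping argument bounding $h$ by an exponential. I would exploit that at positive counter a DVCA transition is independent of the actual counter value (the stack top is always the single symbol $1$), so whole segments of a run may be shifted down by a constant without changing the tuples visited. Consider the tuple $U(\ell)$ seen when the ascent first reaches level $\ell$ and the tuple $D(\ell)$ seen when the descent is at level $\ell$, for $1\le\ell\le h$. If $h>|Q|^{2|Q|}$, then two levels $\ell<\ell'$ satisfy $(U(\ell),D(\ell))=(U(\ell'),D(\ell'))$; excising both the ascending segment between levels $\ell$ and $\ell'$ and the descending segment between levels $\ell'$ and $\ell$ yields a run that visits exactly the same endpoint tuples and removes equally many pushes and pops, hence preserves the final counter value and in particular the condition $c=0$ for the empty model. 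Iterating, I may assume $h\le|Q|^{2|Q|}=2^{\mathcal{O}(|Q|\log|Q|)}$, so each configuration $(\bar q,c)$ fits in polynomial space, the configuration graph has at most $2^{\mathcal{O}(|Q|\log|Q|)}$ vertices, and reachability of an accepting configuration from the start configuration is in $\NPSPACE=\PSPACE$ by Savitch's theorem. The delicate point, and the part I would write out most carefully, is this simultaneous excision: one must check that shifting the untouched ascent tail and descent head downward keeps the counter positive exactly where counter-independence is invoked, so that the spliced run is genuinely a legal $1$-turn run of $M^{|Q|}$.
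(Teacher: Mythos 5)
Your route is necessarily different from the paper's: the paper proves this theorem in a single line, by citing \cite[Theorems 16 \& 17]{dpda-crossref}, which establish the corresponding \PSPACE\ memberships for general (non-visibly) finite-turn DPDAs and DCAs, and then invoking inclusion of automata classes. So a direct, self-contained argument is a genuinely different (and more informative) route. Your first half is sound: for \textsc{0-Turn-Sync-DVPDA-Arb}, dropping the $\gamma_i=\gamma_j$ requirement from the construction of Theorem~\ref{thm:DVPDA-0-same-PSPACE}, keeping the empty-stack bit $b$, and accepting any synchronized tuple exactly captures $0$-turn runs (returns only before the first call), and the $\NPSPACE=\PSPACE$ reachability test goes through.

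The gap is in the $1$-turn DVCA argument, and it is not where you flagged it. The simultaneous excision you single out as delicate is in fact unproblematic: the middle segment, from $U(\ell')$ up to the peak and back down to $D(\ell')$, stays at counter at least $\ell'$ throughout, so after shifting down by $d=\ell'-\ell$ it stays at counter at least $\ell\ge 1$, and one-counter transitions at positive counter are oblivious to the exact value. The real problem is that your pigeonhole presupposes that $D(\ell)$ is \emph{defined} for all $1\le\ell\le h$, i.e., that the descent reaches level $1$. That holds in the empty stack model, where the run must end at counter $0$, but in the arbitrary model (which, as you correctly note, coincides with the same model for DVCAs) a $1$-turn synchronizing run may end at a counter value $c_f$ close to the peak $h$ --- say a huge ascent followed by two decrements, or no descent at all. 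Then $D(\ell)$ exists only for $\ell\ge c_f$, there are too few pairs $(U(\ell),D(\ell))$ to force a repetition, and your bound on $h$ does not follow as stated. The repair needs a case distinction: if $h-\max(c_f,1)>|Q|^{2|Q|}$, run your pigeonhole over the levels $\ell\in\{\max(c_f,1),\dots,h\}$ where both tuples exist; otherwise $c_f$ itself exceeds $|Q|^{|Q|}+1$ (assuming $h$ large), and then a pigeonhole on $U$ alone over levels $\{1,\dots,|Q|^{|Q|}+1\}$ yields a repeat with gap $d\le|Q|^{|Q|}<c_f$, so excising that ascent segment alone and shifting the \emph{entire} remainder of the run down by $d$ is legal, since every configuration of the remainder has counter at least $\min(\ell',c_f)\ge d+1$ and so stays positive; this also covers the pure-upstroke case. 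With this repair your exponential peak bound, and hence \PSPACE\ membership via Savitch's theorem, holds for all three stack models.
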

\begin{proof}
	The claim follows from \cite[Theorem 16 \& 17]{dpda-crossref} by inclusion of automata classes.
\end{proof}


\section{Sequential Transducers}
\label{sec:Transducer}
In~\cite{dpda-crossref}, the concept of trace-synchronizing a sequential transducer has been introduced. We want to extend this concept to sequential transducers showing some kind of \emph{visible behavior} regarding their output, inspired by the predetermined stack height behavior of DVPDAs. 
We call $T = (Q, \Sigma, \Gamma, q_0, \delta, F)$ a \emph{sequential transducer} (ST for short)
if $Q$ is a finite set of states, $\Sigma$ is an 
input alphabet, $\Gamma$ is an 
output alphabet, $q_0$ is the start state, $\delta \colon Q \times \Sigma \to Q \times \Gamma^*$ is a total transition function, and $F$ collects the final states. 
We generalize $\delta$ from input letters to words by concatenating the produced outputs. 
$T$ is called a \emph{visibly sequential transducer} (VST for short) [or \emph{very visibly sequential transducer} (VVST for short)] if for each $\sigma \in \Sigma$ and for all $q_1, q_2 \in Q$ and $\gamma_1,\gamma_2\in\Gamma^*$, it holds that $\delta(q_1, \sigma) = (q_1', \gamma_1)$ and $\delta(q_2, \sigma) = (q_2', \gamma_2)$ implies that $|\gamma_1| = |\gamma_2|$ [or that $\gamma_1 = \gamma_2$, respectively].
A VVST $T$ is thereby computing the same homomorphism $h_T$, regardless of which states are chosen as start and final states (*). Hence, if $A_T$ is the underlying DFA (ignoring any outputs), then $h_T(\lang(A_T))\subseteq\Gamma^*$ describes the language of all possible outputs of $T$. By Nivat's theorem~\cite{Niv68}, a language family is a full trio iff 
it is closed under VVST and inverse homomorphisms. Our considerations also show that  a language family is a full trio iffit is closed under VVST and inverse VVST mappings.

We say that a word $w$ \emph{trace-synchronizes} a sequential transducer $T$ if, for all states $p, q \in Q$, 
$\delta(p, w) = \delta(q, w)$,
i.e.,  a synchronizing state is reached, producing identical output.
Notice that from the viewpoint of trace-synchronization, we do not assume that a VVST has only one state. 

\begin{definition}[\sc Trace-Sync-Transducer]
	\ \\
	Given: Sequential transducer $T = (Q, \Sigma, \Gamma, \delta)$.\\
	Question: Does there exists a word $w \in \Sigma^*$ that trace-synchronizes $T$?
\end{definition}

\nocite{DBLP:books/lib/Berstel79,Ginsburg66,DBLP:journals/tcs/Carton07,sakarovitch2003elements}
\begin{toappendix}
\noindent{\sffamily\textcolor{darkgray}{\textbf{Remarks on sequential transducers.}}}
The definitions in the literature are not very clear for finite automata with outputs. We follow here the name used by Berstel in~\cite{DBLP:books/lib/Berstel79}; Ginsburg~\cite{Ginsburg66} called Berstel's sequential transducers \emph{generalized machines}, but used the term \emph{sequential transducer} for the nondeterministic counterpart.
For non-deterministic transducers which allow to read multiple letters at once the concept of fixing the ratio between the length of the produced output and the length of the input was already studied in~\cite{DBLP:journals/tcs/Carton07} and was even mentioned in~\cite{sakarovitch2003elements}. Here, the ratio is fixed for every transition independent of the input letter(s) and a transducer admitting such a fixed ratio $\alpha$ is called \emph{$\alpha$-synchronous}. The term 'synchronization' again appears here but refers to finding an $\alpha$-synchronous transducer to a given rational relation.
\end{toappendix}

We define 
\textsc{Trace-Sync-VST} and \textsc{Trace-Sync-VVST} by considering a VST, respectively VVST, instead.
In contrast to the undecidability of \textsc{Trace-Sync-Transducer}~\cite{dpda-crossref}, we get the following results for trace-synchronizing VST and VVST from previous results. 
\begin{theorem}\label{Sync-VST-PSPACE}
	\textsc{Trace-Sync-VST} is \PSPACE-complete.
\end{theorem}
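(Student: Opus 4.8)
The plan is to show both membership in \PSPACE{} and \PSPACE-hardness. For membership, I would exploit the visibly condition on the output. Given a VST $T = (Q, \Sigma, \Gamma, \delta)$, the key observation is that for a VST, reading any word $\sigma$ from two different states produces outputs of equal \emph{length} (though possibly different content). This means the output-length behavior is predetermined by the input word, exactly mirroring the predetermined stack-height behavior of a DVPDA. I would therefore encode a VST as a DVPDA: push the produced output symbols onto the stack as each input letter is read, so that the stack records the entire output trace. Trace-synchronization then becomes exactly synchronization in the \emph{same stack model}, since two runs agree on their output trace iff they reach the same state with identical stack content. Since a VST need not have equal output \emph{content} on each transition (only equal length), there is genuinely no guarantee the stacks coincide, which matches the \PSPACE-hardness regime of \textsc{Sync-DVPDA-Same}. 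The subtlety is that \textsc{Sync-DVPDA-Same} is only known to be in \EXP{} in general (Theorem~\ref{thm:visibly-Exptime}), so I cannot merely cite that reduction for \PSPACE-membership.

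To get the sharper \PSPACE{} upper bound, I would instead observe that trace-synchronization never needs to pop from the stack: we only ever append output and then compare the accumulated traces, so the relevant behavior is a \emph{0-turn} (in fact monotone, non-decreasing) stack behavior. This lets me route the membership through Theorem~\ref{thm:DVPDA-0-same-PSPACE} (\textsc{0-Turn-Sync-DVPDA-Same} is in \PSPACE) or, more directly, through its underlying argument: because nothing is ever popped, the next transition never depends on the stack content, so I can build the $|Q|$-fold product as a DFA with an additional bit tracking whether any output has yet been produced, guessing a synchronizing path by reachability. The crucial check is that whenever two runs append output on a shared letter, the appended output \emph{contents} agree; runs that ever diverge in output content can never be trace-synchronized and are discarded. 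Reachability in this exponentially-large (but polynomially-describable) product DFA is testable in \NPSPACE{} $=$ \PSPACE{} by Savitch's theorem~\cite{DBLP:journals/jcss/Savitch70}.

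For \PSPACE-hardness, I would reduce from \textsc{Sync-DVPDA-Same-NoReturn}, which is \PSPACE-hard by Corollary~\ref{cor:DVPDA-Same-NoReturn-PSPACE} and, by Corollary~\ref{DVPDA-Same-NoReturn-PSPACE}, in \PSPACE. A DVPDA with $\Sigma_{\text{ret}} = \emptyset$ only ever pushes or leaves the stack unchanged, so its entire stack content is exactly the sequence of pushed symbols, i.e.\ an output trace. I would translate such a DVPDA into a VST over output alphabet $\Gamma$ by having each call transition $\delta_{\text{c}}(q,\sigma) = (q',\gamma)$ emit $\gamma$ as output and each internal transition emit the empty word; the visibly (equal-length) condition on outputs is inherited from the fact that call letters always push one symbol and internal letters push none, so output length depends only on the input letter. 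Then a word synchronizes the DVPDA in the same stack model exactly when it trace-synchronizes the resulting VST, establishing \PSPACE-hardness.

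The main obstacle I anticipate is the membership direction: the natural DVPDA encoding of a VST lands in \textsc{Sync-DVPDA-Same}, whose best general bound is \EXP, so I must argue carefully that the no-pop (0-turn, monotone) structure of trace-synchronization is preserved and genuinely lets me invoke the \PSPACE{} machinery of Theorem~\ref{thm:DVPDA-0-same-PSPACE} rather than the weaker \EXP{} bound. A secondary technical point is handling the empty-word outputs of internal transitions correctly in the length-synchronization bookkeeping, but this is routine once the reduction to the 0-turn same-stack setting is set up.
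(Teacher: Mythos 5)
Your proposal is correct and takes essentially the same approach as the paper: the paper likewise proves the theorem via reductions in both directions between \textsc{Trace-Sync-VST} and \textsc{Sync-DVPDA-Same-NoReturn}, obtaining hardness exactly as you do (reading the write-only stack of a no-return DVPDA as an output trace, Corollary~\ref{cor:DVPDA-Same-NoReturn-PSPACE}) and membership via Corollary~\ref{DVPDA-Same-NoReturn-PSPACE}, i.e.\ the same 0-turn same-stack \PSPACE{} machinery of Theorem~\ref{thm:DVPDA-0-same-PSPACE} that you invoke. The only presentational difference is that the paper handles multi-symbol outputs in the membership direction by packing each nonempty output word into a single stack symbol (letters with nonempty output become call letters, the rest internal), whereas you re-derive the product-DFA reachability argument directly; both resolve the same technical point.
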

\begin{proof}
	First, observe that there is a straight reduction from the problem \textsc{Sync-DVPDA-Same-NoReturn} to \textsc{Trace-Sync-VST} as the input DVPDAs to the problem \textsc{Sync-DVPDA-Same-NoReturn} have no return letters and hence, the stack is basically a write only tape. Further, as the remaining alphabet is partitioned into letters in $\Sigma_{\text{call}}$, which write precisely one symbol on the stack, and into letters in $\Sigma_{\text{int}}$, writing nothing on the stack, the visibly condition is satisfied when interpreting the DVPDA with $\Sigma_{\text{ret}}=\emptyset$ as a VST.
	
	There is also a straight reduction from \textsc{Trace-Sync-VST} to \textsc{Sync-DVPDA-Same-NoReturn} as follows. For a VST $T =(Q, \Sigma, \Gamma, \delta)$ we construct a DVPDA $M = (Q, \Sigma_{\text{call}} \cup \Sigma_{\text{int}}, \Gamma', \delta)$ with $\Sigma_{\text{ret}}=\emptyset$ by introducing for each $\sigma \in \Sigma$ a new alphabet $\Sigma_\sigma=\{w\in \Gamma^* \mid \exists q, q' \in Q \colon \delta(q, \sigma) = (q', w)\}$. Observe that $\Sigma_\sigma$ is either $\{\epsilon\}$ or contains only words of the same length. By setting $\Sigma_{\text{int}} = \{\sigma \in \Sigma\mid \Sigma_\sigma=\{\epsilon\}\}$, $\Sigma_{\text{call}} = \{\sigma \in \Sigma \mid \Sigma_\sigma \neq \{\epsilon\}\}$, $\Gamma' = \bigcup_{\sigma \in \Sigma} (\Sigma_\sigma\backslash \{\epsilon\})$, and interpreting the output sequence $w \in \Gamma^*$ produced by $\delta$ as the single stack symbol in $\Gamma'$.
\end{proof}

\noindent
Yet, 
by Observation (*), we inherit from \textsc{Sync-DFA} the following algorithmic result.
\begin{theorem}\label{Sync-VVST-PTIME}
	\textsc{Trace-Sync-VVST} is in \PTIME.
\end{theorem}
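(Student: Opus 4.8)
The plan is to exploit Observation (*) to turn trace-synchronization of a VVST into plain DFA-synchronization. First I would recall what (*) gives us: a VVST $T = (Q, \Sigma, \Gamma, \delta)$ computes a single homomorphism $h_T \colon \Sigma^* \to \Gamma^*$ that is independent of the starting state, because for each input letter $\sigma$ the output $\gamma$ with $\delta(q,\sigma) = (q',\gamma)$ is the same for every state $q$. Consequently, for any two states $p,q$ and any input word $w$, the output component produced along the run from $p$ under $w$ equals the output component produced from $q$ under $w$, namely $h_T(w)$; the outputs are automatically equal regardless of which state one starts in.

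Given this, the key step is to observe that the \emph{output condition} in the definition of trace-synchronization is vacuous for a VVST. Recall $w$ trace-synchronizes $T$ iff for all $p,q \in Q$ we have $\delta(p,w) = \delta(q,w)$, where this equality is over $Q \times \Gamma^*$, i.e.\ it requires both the reached state and the produced output to coincide. Since by (*) the produced output is always $h_T(w)$ on every run regardless of the start state, the output parts agree for \emph{any} word $w$. Hence the equality $\delta(p,w)=\delta(q,w)$ reduces to the equality of the state components alone.

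Therefore I would let $A_T = (Q, \Sigma, \delta_Q)$ be the underlying DFA obtained by projecting $\delta$ onto its state component, i.e.\ $\delta_Q(q,\sigma) = q'$ whenever $\delta(q,\sigma) = (q',\gamma)$. The argument above shows that $w$ trace-synchronizes $T$ if and only if $w$ synchronizes $A_T$ in the ordinary DFA sense, that is, $|\delta_Q(Q,w)| = 1$. The construction of $A_T$ is a trivial linear-time projection. Since \textsc{DFA-Sync} is solvable in polynomial time~\cite{cerny1964,Vol2008,San2005}, this yields a polynomial-time decision procedure for \textsc{Trace-Sync-VVST}, establishing the claim.

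I do not expect a genuine obstacle here; the whole content of the theorem is the reduction to the state-synchronization of the underlying DFA, and the only point requiring care is to state clearly that the output coincidence is automatic for VVSTs, which is exactly what Observation (*) supplies. The main thing to get right is that trace-synchronization demands equality of the full pair $(q',\gamma)$ and to verify that the $\gamma$-part imposes no extra constraint in the very visibly case.
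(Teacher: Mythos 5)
Your proposal is correct and follows exactly the paper's own argument: by the very visibly property (Observation (*)), every run produces the same output $h_T(w)$ regardless of the start state, so the output-equality requirement in trace-synchronization is vacuous and the problem reduces to ordinary synchronization of the underlying DFA, which is decidable in polynomial time. The paper states this more tersely (it invokes the pair-algorithm for DFAs directly rather than spelling out the projection $\delta_Q$), but the content is identical.
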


\begin{toappendix}
\begin{proof}[Proof of Theorem~\ref{Sync-VVST-PTIME}]
	For each VVST $T=(Q, \Sigma, \Gamma, \delta)$ and $w \in \Sigma^*$ the same output is already produced in $\delta(q, w)$ for all $q \in Q$. Hence, we can ignore the output and test $T$ for trace-synchronization by the polynomial time pair-algorithm for DFAs~\cite{San2005}.
\end{proof}
\end{toappendix}

\section{Discussion}
\label{sec:Discussion}
Our results 
concerning DVPDAs and sub-classes thereof, are subsumed in Table~\ref{tab:VPA-results}. While all problems listed in the table are contained in \EXP, the table lists several problems for which their known complexity status still contains a gap between \PSPACE-hardness lower bounds and  \EXP\ upper bounds. Presumably, their precise complexity status is closely related to upper bounds on the length of synchronizing words which we want to consider in the near future.
One of the questions which could be solved in this work is if there is a difference between the complexity of synchronization in the same stack model and synchronization in the arbitrary stack model. While for general DPDA, DCA, and sub-classes thereof, see~\cite{dpda-crossref}, these two models admitted synchronization problems with the same complexity, here we observed that these models  can differ significantly. 
While the focus of this work is on determining the complexity status of synchronizability
for different models of automata, an obvious question for future research is the complexity status of closely related, and well understood questions in the realm of DFAs, such as the problem of shortest synchronizing word,
  subset synchronization, synchronization into a subset, and careful synchronization.

Here is one subtlety that comes with shortest synchronizing words: While for finding synchronizing words of length at most $k$ for DFAs, it does not matter if the number $k$ is given in unary or in binary due to the known cubic upper bounds on the lengths of shortest synchronizing words, this will make a difference in other models where such polynomial length bounds are unknown. More precisely, for instance with DVPDAs, it is rather obvious that with a unary length bound~$k$, the problem becomes \NP-complete, while the status is unclear for binary length bounds. As there is no general polynomial upper bound on the length of shortest synchronizing words for VPDAs, they might be of exponential length. Hence, we don not get membership in \PSPACE\ easily, not even for synchronization models concerning DVPDA for which general synchronizability is solvable in \PTIME, as it might be necessary to store the whole word on the stack in order to test its synchronization effects. 
\newpage
\bibliography{bib}

\end{document}